\title{\textsc{Memory-dependent abstractions of stochastic systems through the lens of transfer operators\thanks{This paper was accepted for publication and presentation at the 2025 Hybrid Systems: Computation and Control conference (HSCC 2025).}}}
\author{Adrien Banse$^\star$ \and Giannis Delimpaltadakis$^\dagger$ \and Luca Laurenti$^\ddagger$ \and Manuel Mazo Jr.$^\ddagger$ \and Raphaël M. Jungers$^\star$}
\date{%
    $^\star$UCLouvain, Belgium\\%
    $^\dagger$Eindhoven University of Technology, Netherlands\\
    $^\ddagger$Delft University of Technology, Netherlands
}
\newtheorem{theorem}{Theorem}
\newtheorem{corollary}{Corollary}
\newtheorem{lemma}{Lemma}
\newtheorem{proposition}{Proposition}
\theoremstyle{definition}
\newtheorem{definition}{Definition}
\newtheorem{remark}{Remark}
\newtheorem{assumption}{Assumption}
\newtheorem{example}{Example}
\newtheorem{problem}{Problem}
\begin{document}
    \maketitle
    \begin{abstract}
        With the increasing ubiquity of safety-critical autonomous systems operating in uncertain environments, there is a need for mathematical methods for formal verification of stochastic models. Towards formally verifying properties of stochastic systems, methods based on discrete, finite Markov approximations -- \emph{abstractions} -- thereof have surged in recent years. These are found in contexts where: either a) one only has partial, discrete observations of the underlying continuous stochastic process, or b) the original system is too complex to analyze, so one partitions the continuous state-space of the original system to construct a handleable, finite-state model thereof. In both cases, the abstraction is an approximation of the discrete stochastic process that arises precisely from the discretization of the underlying continuous process. The fact that the abstraction is Markov and the discrete process is not (even though the original one is) leads to approximation errors. Towards accounting for non-Markovianity, we introduce memory-dependent abstractions for stochastic systems, capturing dynamics with memory effects. Our contribution is twofold. First, we provide a formalism for memory-dependent abstractions based on transfer operators. Second, we quantify the approximation error by upper bounding the total variation distance between the true continuous state distribution and its discrete approximation. 
    \end{abstract}
    \textbf{Keywords:} Abstraction, Stochastic system, Transfer operator, Memory Markov model.

    \section{Introduction}

Autonomous systems operating in uncertain environments are becoming ubiquitous, with applications ranging from autonomous driving, to robots in rescue missions, smart grids, smart buildings,~etc.~\cite{Alamir2022, Pairet2022}. Towards safe deployment of such autonomous systems, mathematical methods to formally verify if they meet prespecified requirements (e.g., on safety or performance) are needed \cite{Knight2002, Lee2016}. 

To mathematically analyze the aforementioned systems, while accounting for uncertainty, \emph{stochastic models} are often employed. In this context,  methods that are based on discrete approximations of stochastic systems, called \emph{abstractions}, have recently surged (see \cite{Lavaei2022} for a survey on abstractions for stochastic systems). Abstractions arise in two different contexts, that nevertheless present many mathematical similarities: a) one has access only to partial, discrete observations of the underlying original stochastic process (this is related to the work on \emph{Markov state models} \cite{Sarich2010,NielsenFackeldeyWeber2013,Prinz2011}); b) the original system is too complex to analyze, so one partitions its continuous state-space to construct a finite-state abstraction (this is related to the work on \emph{abstraction-based methods} \cite{Lavaei2022, Abate2010, Lavaei2020, Lahijanian2015, Figueiredo2024, Delimpaltadakis2024, Meng2023}). In both cases, the abstraction takes the form of a finite (sometimes robust) Markov chain.

In virtue of the above, the abstraction is an approximation of the discrete stochastic process that arises from the discretization of the original process. In particular, while the discrete process is not Markov, even though the original continuous process is (see Figure~\ref{fig:non_markov} for an example), for computational reasons, the abstraction is generally constructed to be Markov. This leads to approximation errors. To alleviate approximation errors due to non-Markovianity, we introduce memory-dependent abstractions of stochastic systems (so far, only memoryless abstractions have been proposed; see \emph{Related work} below). The introduction of memory aims precisely at capturing memory effects inherent in the non-Markovian discrete process.
%

\begin{figure}[ht!]
    \centering
    \includegraphics[width = 0.7\linewidth]{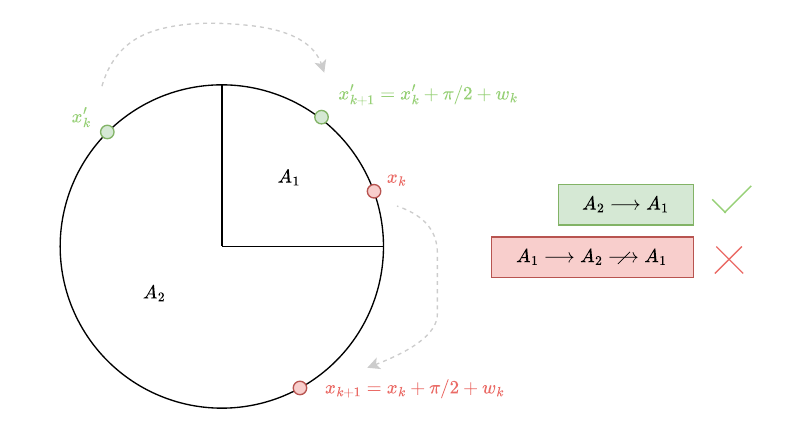}
    \caption{Loss of the Markov property. Consider a stochastic system defined by $x_{k+1} = x_k + \pi/2 + w_k \pmod{2\pi}$, where $w_k$ is a noise whose support is $[0, \pi/10]$. While the original system is Markovian, the discrete process tracking only in which region $x_k$ lies is not. Specifically, one can see that a state $x_{k}$ can jump from $A_2$ to $A_1$ with non-zero probability, i.e. $\mathbb{P}[x_{k+1} \in A_1 | x_{k} \in A_2] > 0$. However, when a larger memory is considered, we see that, if this state was initially in $A_1$, it cannot jump successively to $A_2$ and back to $A_1$, i.e. $\mathbb{P}[x_{k+2} \in A_1 | x_{k+1} \in A_2, x_k \in A_1] = 0$. The Markov property is therefore lost, since $\mathbb{P}[x_{k+1} \in A_1 | x_{k} \in A_2] \neq \mathbb{P}[x_{k+2} \in A_1 | x_{k+1} \in A_2, x_k \in A_1]$.}
    \label{fig:non_markov}
\end{figure}

\paragraph{Contributions}
In this paper, we develop memory-dependent abstractions of stochastic systems. Inspired by symbolic dynamics \cite{Lind1995}, we extend the state space of the stochastic system to create a lifted system, where each state represents an $\ell$-long sequence of states, $\ell$ being the considered memory. Further, akin to work on Markov state models \cite{Sarich2010,NielsenFackeldeyWeber2013,Prinz2011}, we employ transfer-operator theory and construct an $\ell$-memory abstraction, through Galerkin approximations of the lifted process's transfer operator. Critically, we provide an upper bound on the total variation distance between the distribution of the original continuous state system and its discrete approximation, enabling formal verification through the abstract model. Finally, we showcase through examples how memory increases approximation accuracy in various situations. This work therefore marks a significant step toward creating smart memory-dependent abstractions for the analysis and control of complex systems. 


\paragraph{Related work}
Memory-dependent abstractions have been developed for the analysis and control of deterministic systems \cite{Schmuck2014,Schmuck2015,Banse2023a,Banse2023}. However, to the best of our knowledge, such techniques have not been investigated for stochastic systems, where abstractions are memoryless (robust) Markov chains \cite{Abate2010, Lavaei2020, Lahijanian2015, Figueiredo2024, Delimpaltadakis2024, Meng2023}. Arguably, that is because incorporating memory in stochastic abstractions is fundamentally different than the deterministic case. For memory-dependent abstractions of deterministic systems, the \emph{domino rule} is employed, which, deeply rooted in determinism, is simply not applicable for stochastic systems (see Figure~\ref{fig:deterministic_vs_stochastic} for an explanation). Thus, extending memory-dependent abstraction techniques to stochastic systems presents a non-trivial challenge, and requires fundamentally different mathematical tools, which we develop here. 

Our work is also deeply related to and inspired by the work on Markov state models \cite{Sarich2010,NielsenFackeldeyWeber2013,Prinz2011}, which employs Galerkin approximations of the underlying Markov process's transfer operator, to build a finite approximation in a partial-observation scenario. Our contribution w.r.t. \cite{Sarich2010,NielsenFackeldeyWeber2013,Prinz2011} is twofold. First, we introduce memory to the discrete approximation, whereas only the memoryless case is studied in those works. Second, although we use intermediate results proven in \cite{Sarich2010}, our bounds - even in the memoryless case - fundamentally differ from those of the latter (see Remark~\ref{rem:difference_sarich} for more details). 

Finally, our work is related to partially observable Markov decision processes (POMDPs for short, see \cite{Suilen2025} for a survey). Indeed, the considered dynamical systems can be framed in the formalism of POMDPs, where the continuous state space of the original system and the discrete cells of the partition are the state and observation spaces, respectively, such as e.g. in \cite{5404339}. Although the loss of the Markov property for observations is a known phenomenon in POMDP literature \cite{guo2023sampleefficientlearningpomdpsmultiple}, our contribution departs from this literature in that we focus on problems that arise in the framework of (safety-critical) abstractions. That is, we study the loss of the Markov property when the state space is continuous and is approximated by cells corresponding to sets of states.

\begin{figure}[h!]
    \centering
    \includegraphics[width = 0.5\linewidth]{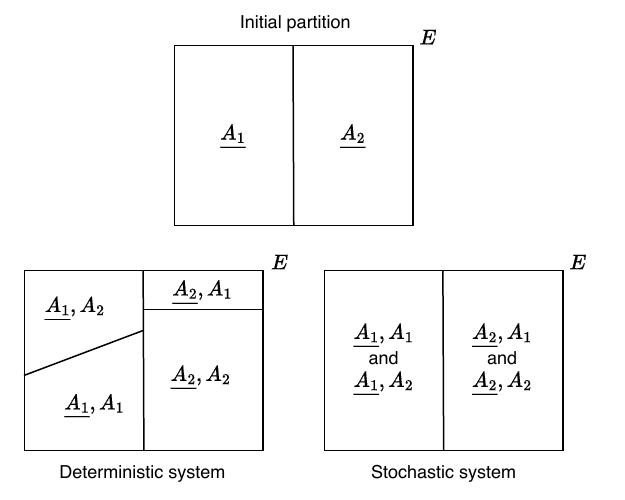}
    \caption{Domino rule for memory-dependent abstractions. Let $A_1$ and $A_2$ be two blocks of an initial partition on the state space $E$. $A_1$ and $A_2$ correspond to states of a 1-memory (memoryless) abstraction. Towards a 2-memory abstraction for deterministic systems, the domino rule proceeds as follows. The cell $A_1$ is divided into $A_1, A_1$ and $A_1, A_2$, the sets of states that are in $A_1$ and that will respectively be either in $A_1$ or in $A_2$ at the next timestep (the same happens to subdivide $A_2$). For stochastic systems, such division is not possible, as generally, even though the system might be in a specific state in $A_1$, it can visit any of $A_1$ or $A_2$ in the next step, due to stochasticity; in other words, there is no set of states in $A_1$ that deterministically visit either of $A_1$ and $A_2$ in the next step (similarly for the states initially in $A_2$).}
    \label{fig:deterministic_vs_stochastic}
\end{figure}

\paragraph{Outline} This paper is structured as follows. Section~\ref{sec:problem} defines the considered family of systems and the studied problem. Section~\ref{sec:preli} introduces key theoretical concepts, covering probability theory, transfer operators, and Galerkin approximations. Section~\ref{sec:abstraction} contains our overall method to abstract systems with $\ell$-memory Markov models. We present the lifted system on which relies our analysis, and we provide a mathematical framework that justifies our method. In Section~\ref{sec:bounds}, we provide total variation guarantees. Section~\ref{sec:numerical} provides numerical experiments, and Section~\ref{sec:conclu} concludes with discussions on the approach, its limitation and future directions.

\paragraph{Notations} Given any set $A$, $\chi_A$ is the \emph{indicator function} of $A$. $\mathbb{R}$ and $\mathbb{C}$ respectively denote the sets of \emph{real} and \emph{complex} numbers. The sets $\mathbb{R}_{> 0}$ and $\mathbb{R}_{\geq 0}$ respectively denote the set of positive and nonnegative real numbers. For $a \in \mathbb{C}$, $|a|$ denotes the \emph{modulus} of $a$. For a natural number $n$, the set $[n]$ denotes $\{1, \dots, n\}$. Let $E \subseteq \mathbb{R}^d$, we denote by $\mathcal{B}(E)$ the \emph{Borel set of $E$}, and the couple $(E, \mathcal{B}(E))$ forms a measurable space. All along this work we consider \emph{probability spaces} $(E, \mathcal{B}(E), \lambda)$, where $\lambda : \mathcal{B}(E) \to [0, 1]$ is a \emph{probability measure}. A set $(F, \langle \cdot, \cdot \rangle)$ is called a \emph{Hilbert space} if $\langle \cdot, \cdot \rangle$ is a \emph{dot product}. In this paper, given a measure $\mu$ on $E$, we consider the Hilbert space of \emph{square-integrable functions}, noted $L^2(\mu)$, defined as the set of functions $f : E \to \mathbb{R}$ such that $\int_{x \in E} (f(x))^2 \mu(\text dx) < + \infty$, and where the dot product is defined as $\langle f, g \rangle = \int_{x \in E} f(x)g(x) \mu(\text dx)$. The associated norm is defined as $\|f\|_2 = \langle f, f \rangle^{1/2}$. Let $L^1(\mu)$ be the set of functions such that $\int_{x \in E} |f(x)| \mu(\text dx) < + \infty$, we consider that $\mu$ is such that $L^2(\mu) \subseteq L^1(\mu)$, and we define $\|f\|_1 = \int_{x \in E} |f(x)| \mu(\text dx)$. Let $P : L^2(\mu) \to L^2(\mu)$ be an operator, the \emph{operator norm} of $P$ is defined as $\|P\|_p = \sup_{f \in L^2(\mu) : \|f\|_p \leq 1} \|Tf\|_p$, for $p = 1, 2$. Finally, given a measurable space $(E, \mathcal{F})$, where $\mathcal{F}$ is any $\sigma$-algebra, and given two probability measures $\mu$ and $\nu$ on $(E, \mathcal{F})$, the \emph{total variation distance} between $\mu$ and $\nu$ is defined as $\mathsf{TV}(\mu, \nu) = \sup_{A \in \mathcal{F}} |\sigma(A) - \nu(A)|$.


\section{Problem formulation} \label{sec:problem}

\subsection{System description}
In this work, we consider discrete-time stochastic dynamical systems defined as
\begin{equation} \label{eq:system}
\begin{cases}
    x_{k+1} \sim \tau(\cdot | x_k), \\
    x_0 \sim \lambda_0, \\
    y_k = h(x_k), 
\end{cases}
\end{equation}
where $x_k \in E$ is the \emph{state} of the system at time $k$ and the set $E \subseteq \mathbb{R}^d$ is the \emph{state space}. For all $x \in E$, $\tau(\cdot | x) : \mathcal{B}(E) \to [0, 1]$ is the \emph{transition kernel}. The probability measure $\lambda_0 : \mathcal{B}(E) \to [0, 1]$ is the \emph{initial measure}. $y_k \in F$ is the \emph{output} at time $k$, and the set $F$ is the \emph{output space}, which is assumed to be finite. Without loss of generality, we consider that $F = \{1, \dots, n\}$ throughout the paper. The function $h : E \to F$ is called the \emph{output function} and defines a \emph{partition} of the state space $E$. Indeed, let
\begin{equation} \label{eq:partition}
    A_i = \{x \in E : h(x) = i\}, 
\end{equation}
then the collection of sets $A_1, \dots, A_n$ is such that 
\begin{equation*}
\begin{aligned}
    \text{(covering)}& \quad \bigcup_{i = 1}^n A_i = E, \\
    \text{(pairwise disjoint)}& \quad \forall i \neq j: A_i \cap A_j = \emptyset.
\end{aligned}
\end{equation*}

A state sequence $x_0, x_1, \dots$ of system~\eqref{eq:system} is a realisation of the \emph{state stochastic process}, which is denoted by $(X^{\lambda_0}_k)_{k \geq 0}$. When the initial measure is clear from the context, we omit $\lambda_0$ from the notation and simply write $X_k$. Given the definition of system~\eqref{eq:system}, the probability measure associated to the state process is defined as
\begin{equation*}
\begin{aligned}
    \mathbb{P}[X_0 \in A_0] &= \lambda_0(A_0), \\
    \mathbb{P}[X_k \in A_k] &= \int_{x_{k-1} \in E} \dots \int_{x_0 \in E} \tau(A_k | x_{k-1}) \dots \tau(\text d x_1 | x_0) \lambda_0(\text dx_0),\\
    \mathbb{P}[X_0 \in A_0, \dots, X_k \in A_k] &= \int_{x_{k-1} \in A_{k-1}} \dots \int_{x_0 \in A_0} \tau(A_k | x_{k-1}) \dots \tau(\text d x_1 | x_0) \lambda_0(\text dx_0), 
\end{aligned}
\end{equation*}
where $A_0, \dots, A_k \in \mathcal{B}(E)$. We denote by $\lambda_k$ the probability measure induced by $\mathbb{P}[X_k \in \cdot]$. 

We now recall the definition of an \emph{invariant measure}.
\begin{definition}[Invariant measure]
    A measure $\mu : \mathcal{B}(E) \to [0, 1]$ is said to be \emph{invariant} for the system~\eqref{eq:system} if, for all $A \in \mathcal{B}(E)$, it holds that
    \begin{equation*}
        \int_{x \in E} \tau(A | x) \mu(\text dx) = \mu(A).
    \end{equation*}
\end{definition}

\begin{assumption}[Existence and uniqueness]\label{ass:ex_un}
    System~\eqref{eq:system} admits a unique invariant measure, denoted by $\mu$.
\end{assumption}
\begin{assumption}[Ergodicity]\label{ass:ergodic}
    System~\eqref{eq:system} converges in total variation to the invariant measure, that is
    \begin{equation*}
        \lim_{k \to \infty} \mathsf{TV}(\lambda_k, \mu) = 0.
    \end{equation*}
\end{assumption}
Assumptions \ref{ass:ex_un} and \ref{ass:ergodic} are standard in the literature \cite{Hairer2006ErgodicPO, Katok1995}, and commonly met in many cases of practical interest.
We leave the extension to systems that do not satisfy Assumption~\ref{ass:ex_un} and Assumption~\ref{ass:ergodic} as future work. 

The output sequence $y_0, y_1, \dots$ of system~\eqref{eq:system} is a realisation of the \emph{output stochastic process}, denoted by $(Y^{\lambda_0}_k)_{k \geq 0}$. Again, we omit $\lambda_0$ when it is clear from the context. The output process is defined as 
\begin{equation*}
    Y_k = h(X_k) \in \{1, \dots, n\}.
\end{equation*}
Observe that, although the continuous process $(X_k)_{k \geq 0}$ is a Markov process, the discrete process $(Y_k)_{k \geq 0}$ is generally non-Markovian \cite{Banse2023, Sarich2010}, that is 
\begin{equation*}
\begin{gathered}
    \mathbb{P}[Y_{k+1} = i_{k+1} | Y_k = i_k, \dots, Y_0 = i_0] \neq \mathbb{P}[Y_{k+1} = i_{k+1} | Y_k = i_k]. 
\end{gathered}
\end{equation*}
We invite the reader to see Figure~\ref{fig:non_markov} for an illustrative example of this phenomenon.




\subsection{Problem statement}

In this paper, given an infinitely long output sequence $\{y_i\}_i$, we aim at approximating the continuous Markov process $(X_k)_{k \geq 0}$ with a \emph{discrete, $\ell$-memory Markov process}, denoted by $(\tilde{Y}_{\ell, k})_{k \geq 0}$. That is, we construct the discrete process $(\tilde{Y}_{\ell, k})_{k \geq 0}$ such that 
\begin{equation*}
\begin{aligned}
    \mathbb{P}[\tilde{Y}_{\ell, k} = i_k | \tilde{Y}_{\ell, k-1} = i_{k-1}, \dots, \tilde{Y}_{\ell, 0} = i_0]
    = \mathbb{P}[\tilde{Y}_{\ell, k} = i_k | \tilde{Y}_{\ell, k-1} = i_{k-1}, \dots, \tilde{Y}_{\ell, k-\ell} = i_{k-\ell}].
\end{aligned}
\end{equation*}
In the literature, $\ell$-memory Markov processes are also sometimes referred to as \emph{Markov chains with memory $\ell$} (see e.g. \cite{Wu2017}), and constitute Markov chains in a lifted state space.

More precisely, we show that the discrete process $(\tilde{Y}_{\ell, k})_{k \geq 0}$ induces a probability measure $\tilde{\lambda}_{\ell, k}$ on the state space $E$, and we address the following problem. 
\begin{problem} \label{prob:tv} 
    Given an infinitely long output sequence $\{y_i\}_i$ of system \eqref{eq:system}, construct an $\ell$-memory Markov process $(\tilde{Y}_{\ell, k})_{k \geq 0}$ and compute the distance
    \begin{equation} \label{eq:tv_error}
        \mathsf{TV}(\lambda_k, \tilde{\lambda}_{\ell, k}),
    \end{equation}
    where $\tilde{\lambda}_{\ell, k}$ is the probability measure on the state space $E$ induced by $(\tilde{Y}_{\ell, k})_{k \geq 0}$.
\end{problem}


\begin{remark} \label{rem:practice_sampling}
    In practice, as we employ ergodicity of system \eqref{eq:system} (see Assumption~\ref{ass:ergodic}), the given output sequence $\{y_i\}_i$ has to be sufficiently -- not infinitely -- long, so that the Markov process $(X_k)_{k \geq 0}$ has almost reached its steady state (in the sense that $\mathsf{TV}(\lambda_k, \mu) < \varepsilon$, for small $\varepsilon$). See also Remark \ref{rem:computing steady state}. $\hfill \blacksquare$
\end{remark}

Observe that, through $\tilde{\lambda}_{\ell,k}$ and $\mathsf{TV}(\lambda_k, \tilde{\lambda}_{\ell, k})$, one can derive bounds on probabilistic properties of system \eqref{eq:system} (e.g., bounds on the probability of the state $X_k$ landing on some unsafe set, in some finite horizon). Our work is motivated by two distinct settings, which nevertheless present many mathematical similarities:

\textbf{Case 1 - Partially observable systems.} We would like to analyze properties of the underlying process $(X_k)_{k \geq 0}$, but can only observe (samples $y_k$ of) the output process $(Y_k)_{k \geq 0}$. Thus, through observing $(Y_k)_{k \geq 0}$, we construct a discrete, $\ell$-memory Markov approximation $(\tilde{Y}_{\ell,k})_{k \geq 0}$. Including memory is done precisely to capture non-Markovian effects inherent to the observed process $(Y_k)_{k \geq 0}$. As it will become evident both from the theory and the numerical examples, in certain cases, increasing memory $\ell$ leads to tighter approximations. Overall, this is related to earlier works on Markov state models \cite{Sarich2010,NielsenFackeldeyWeber2013,Prinz2011}.

\textbf{Case 2 - Finite abstractions.} System \eqref{eq:system} is fully observable (disregard the output process), but too complex to derive analytic results on its properties. Akin to standard abstraction methods, we discretize the state space to derive a finite partition $\{A_i\}_i$.
In contrast, through our method, we start from a coarser partition $\{A_i\}_i$, and, towards tighter approximations, the refinement is performed through increasing the abstraction's memory $\ell$; hence, the title of the paper. Alternatively, while standard abstraction methods, towards approximating $(X_k)_{k \geq 0}$, through partitioning the state space, implicitly approximate the non-Markovian process $(Y_k)_{k \geq 0}$ by a (1-memory) Markov process, we approximate $(Y_k)_{k \geq 0}$ by an $\ell$-memory Markov process $(\tilde{Y}_{\ell,k})_{k \geq 0}$, aiming precisely at capturing the non-Markov effects introduced exactly by partitioning the state-space in the first place

\paragraph{Approach} To address Problem~\ref{prob:tv}, we rely on the transfer operator of the stochastic system and its spectral properties. In particular, we define a lifted system that describes the evolution of a $\ell$-long sliding window of states $(x_k, \dots, x_{k+ \ell-1})$. We then construct a memory Markov process so that the transition matrix of the latter is a Galerkin approximation of the transfer operator of the lifted system. 
We derive two upper bounds on the total variation \eqref{eq:tv_error}. The first one consists of accumulation of projection errors, and increases with $k$. The second one is a consequence of the convergence of both models to the invariant measure, and decreases with $k$. For any memory $\ell$, the combination of these two bounds therefore provides a computable upper bound on the total distance.



\section{Preliminaries} \label{sec:preli}

As explained in Section~\ref{sec:problem}, our results rely on the theory of transfer operators, and their Galerkin approximation. This section formally introduces these concepts. 

\subsection{Probability theory}

We recall that $\mu$ denotes the unique invariant measure of system \eqref{eq:system}. We first give a definition of \emph{$\mu$-weighted probability density functions}.
\begin{definition}[$\mu$-weighted probability density function \cite{Prinz2011}] \label{def:mu-weighted}
    A function $v : E \to \mathbb{R}$ is called a \emph{$\mu$-weighted probability density function} if it is such that
    \begin{equation*}
        \int_{x \in E} v(x) \mu(\text dx) = 1,
    \end{equation*}
    and $v(x) \geq 0$ $\mu$-almost surely.
\end{definition}
The following remark gives an interpretation of $\mu$-weighted probability density functions with respect to usual probability density functions.

\begin{remark}
    Let $\sigma : \mathcal{B}(E) \to [0, 1]$ be any probability measure on the state space. If there exists $v : E \to [0, 1]$ such that  
    \begin{equation*}
        \sigma(A) = \int_{x \in A} v(x) \mu(\text dx),
    \end{equation*}
    for all $A \in \mathcal{B}(E)$ and if the latter is uniquely defined up to $\mu$-null sets, then $v$ is the so-called \emph{Radon-Nikodym derivative} of $\sigma$ with respect to the invariant measure, and is denoted by $\text d\sigma / \text d\mu$ (see e.g. \cite{Shiryaev2016} for more details). By Definition~\ref{def:mu-weighted}, $v$ is also a $\mu$-weighted probability density function. Note that, if it exists, the Radon-Nikodym derivative of $\sigma$ with respect to the Lebesgue measure, denoted by $p$, is known as the usual probability density function, and satisfies 
    \begin{equation*}
        \sigma(A) = \int_{x \in A} p(x) \text dx
    \end{equation*}
    for all $A \in \mathcal{B}(E)$. $\hfill \blacksquare$
\end{remark}

In this paper, since we assume the existence of a unique invariant distribution, we only work with $\mu$-weighted probability density functions and simply refer to them as probability density functions.

\begin{remark} \label{rem:invariant-is-1}
    Since $\mu(A) = \int_{x \in A} \mu(\text dx)$, the probability density function corresponding to the invariant measure is the constant function $\mathds{1}(x) = 1$ for all $x \in E$. $\hfill \blacksquare$
\end{remark}

Finally, consider two $\mu$-weighted probability density functions $v_1$ and $v_2$ that respectively correspond to two probability measures $\lambda_1$ and $\lambda_2$, then, the identity
\begin{equation*}
    \mathsf{TV}(\lambda_1, \lambda_2) = \frac{1}{2}\|v_1 - v_2\|_1 = \frac{1}{2} \int_{x \in E} |v_1(x) - v_2(x)| \mu(\text{d} x)
\end{equation*}
holds \cite{Hairer2006ErgodicPO}. 

\subsection{Transfer operator}

The \emph{transfer operator} corresponding to a transition kernel $\tau$ is defined as follows (see e.g. \cite{Sarich2010, Schuette2001}). 
\begin{definition}[Transfer operator] \label{def:transfer}
    Given a state space $E$ and a kernel $\tau$, the \emph{transfer operator} is the operator $T : L^2(\mu) \to L^2(\mu)$ such that
    \begin{equation*}
        \int_{x \in E} \tau(A | x) v(x) \mu(\text dx) =  \int_{y \in A} (Tv)(y) \mu(\text dy)
    \end{equation*}
    for all functions $v \in L^2(\mu)$ and all sets $A \in \mathcal{B}(E)$.
\end{definition}
If the measure $\tau(\cdot | x)$ admits a Radon-Nikodym derivative $t(\cdot | x) := \text d \tau(\cdot | x) / \text d\mu$ for all $x \in E$, then the transfer operator is explicitly defined as
\begin{equation} \label{eq:transfer}
    (Tv)(y) = \int_{x \in E} t(y | x) v(x) \mu(\text dx) 
\end{equation} 
for all functions $v \in L^2(\mu)$.

Intuitively, the operator $T$ propagates square-integrable functions of the state space in time, including probability density functions in $L^2(\mu)$. Let $v_0 = \text d\lambda_0 / \text d\mu \in L^2(\mu)$ be the probability density of $\lambda_0$ (the initial measure of System~\ref{eq:system}) then one can verify that 
\begin{equation} \label{eq:lambda_k}
    \lambda_k(A) := \mathbb{P}[X_k \in A] = \int_{x \in A} (T^k v_0)(x) \mu(\text dx).
\end{equation}
We therefore write $v_k := T^k v_0$ in the rest of this paper. Note that $v_k$ is the $\mu$-weighted probability density function of the measure $\lambda_k$. Furthermore, as pointed out in \cite{Sarich2010}, since $\mathds{1}$ is the invariant density (see Remark~\ref{rem:invariant-is-1}), $T$ satisfies $T\mathds{1} = \mathds{1}$, and $\mathds{1}$ is the only fixed point of $T$. Finally we define the spectrum of $T$.
\begin{definition}[Transfer operator spectrum] \label{def:spectrum}
    Let $e \in \mathbb{C}$ and $u \in L^2(\mu)$. If
    \begin{equation*}
        T u = eu, 
    \end{equation*}
    then $e$ and $u$ form a pair of \emph{eigenvalue} and \emph{eigenfunction} of $T$. All such pairs form the \emph{spectrum} of $T$. Moreover, a set of eigenfunctions $\{u_1, \dots, u_m\}$ is said to be \emph{orthonormal} if
    \begin{equation*}
        \langle u^\ell_i, u^\ell_j \rangle = 0
    \end{equation*}
    for all $i, j = 1, \dots, m$ such that $i \neq j$, and
    \begin{equation*}
        \|u_i^\ell\|_2 = \sqrt{\langle u^\ell_i, u^\ell_i \rangle} = 1
    \end{equation*}
    for all $i = 1, \dots, m$, where $\langle \cdot, \cdot \rangle$ is the inner product defined as 
    \begin{equation} \label{eq:dot_prod}
        \langle f, g \rangle = \int_{x \in E} f(x)g(x) \mu(\text dx).
    \end{equation}
\end{definition}
Note that the invariant density $\mathds{1}$ is an eigenfunction of $T$ with eigenvalue $1$, since $T\mathds{1} = \mathds{1}$.

\subsection{Galerkin methods} \label{subsec:galerkin}
In this work, inspired by works on Markov state models \cite{Sarich2010,NielsenFackeldeyWeber2013,Prinz2011}, we use a Galerkin method to approximate a lifted transfer operator. 

\begin{definition}[Projection operator] \label{def:projection}
    Let $(H, \langle \cdot, \cdot \rangle)$ be a Hilbert space, and let $\|\cdot\|$ be its associated norm. For a closed subspace $D \subset H$, the surjective map $Q: H \to D$ is an orthogonal projection onto $D$ if $Q^2 = Q$ and $\sup_{f \in H : \| f \| = 1} \|Qf\| = 1$. 
\end{definition}

Consider $(L^2(\mu), \langle \cdot, \cdot \rangle)$, the Hilbert space of square-integrable functions, where $\langle \cdot, \cdot \rangle$ is the usual dot product \eqref{eq:dot_prod}, together with a closed subspace $D \subset L^2(\mu)$ generated by a finite set of functions, that is 
\begin{equation*}
    D = \text{span}(\{\phi_1, \dots, \phi_n\}) 
\end{equation*}
for $n$ functions $\phi_i \in L^2(\mu)$, and let $Q : L^2(\mu) \to D$ be the unique projection operator defined above. Then the operator 
\begin{equation*}
    P := QTQ : D \to D
\end{equation*}
is called the \emph{Galerkin approximation} of $T$.  Since $P : D \to D$ is a finite-dimensional operator, it admits a \emph{matrix representation} $\mathbf{P} \in \mathbb{R}^{n \times n}$, defined as follows. For all $f \in D$, let $\mathbf{f} \in \mathbb{R}^n$ be the vector representation of $f$ in the basis $\{\phi_i\}_{i \in [n]}$, that is 
\begin{equation*}
    f = \sum_{i = 1}^n \mathbf{f}_i \phi_i.
\end{equation*}
Then, if $f' = Pf$ with vector representation $\mathbf{f}'$, it holds that 
\begin{equation*}
    \mathbf{f}' = \mathbf{P}^\top \mathbf{f}. 
\end{equation*}
In the following, we use boldface symbols to denote vector/matrix representations of functions/operators.

\section{Memory-dependent abstractions} \label{sec:abstraction}

In this section, we show how to approximate the true densities $v_k$ of system \eqref{eq:system} with the densities $\tilde{v}_{\ell, k}$ obtained from an $\ell$-memory Markov abstraction, itself derived by a Galerkin approximation of the transfer operator of a lifted system. Finally, we prove the correctness of our approach by upper bounding $\mathsf{TV}(\lambda_k, \tilde{\lambda}_{\ell, k})$. 


\subsection{Overall method} \label{sec:method}

Our approach to approximate $v_k$ is summarized in Algorithm~\ref{alg:l-approx}. We recall that, in what follows, $\chi_A$ denotes the indicator function of set $A$. Also, we consider transition matrices of $\ell$-memory Markov models, denoted $\mathbf{P}_{\ell} \in \mathbb{R}^{n^\ell \times n^\ell}$, where each row and column of such a matrix is labeled by $\ell$-long sequences of outputs $i_0i_1 \dots i_{\ell-1}$, where $i_j \in \{1, \dots, n\}$. Moreover, the matrix is such that
\begin{equation*}
    i_1 \dots i_{\ell-1} \neq j_1 \dots j_{\ell-1}\implies(\mathbf{P}_{\ell})_{i_0 \dots i_{\ell-1}, j_1 \dots, j_{\ell}} = 0.
\end{equation*}
Indeed the state $i_0\dots i_{l-1}$ of the $\ell$-memory Markov model represents the event $Y_0 =i_0, \dots, Y_{\ell-1} = i_{\ell-1}$, and the state $j_1,\dots, j_\ell$  represents the event $Y_1 = j_1, \dots, Y_\ell = j_\ell$. Thus naturally there can be no transition from $i_0,\dots,i_{\ell-1}$ to $j_1,\dots,j_\ell$ if $i_1 \dots i_{\ell-1} \neq j_1 \dots j_{\ell-1}$. As a consequence, the matrix $\mathbf{P}_{\ell}$ contains only $n^{\ell+1}$ possibly non-zero entries. Each considered vector $\mathbf{v}^\ell \in \mathbb{R}^{n^\ell}$ also has entries labeled with $i_0\dots i_{\ell-1}$, such that the matrix-vector product $\mathbf{P}_\ell \mathbf{v}^\ell$ is well defined.

Algorithm~\ref{alg:l-approx} proceeds as follows:
\begin{enumerate}
    \item[1.] An $\ell$-memory Markov chain is built, based on the steady-state dynamics (the dynamics when on the invariant measure). Each entry $(\mathbf{P}_\ell)_{i_0 \dots i_{\ell-1}, i_1 \dots i_\ell}$ contains the probability to go to the blocks $A_{i_\ell}$ knowing the $\ell$ last blocks were $A_{i_0}$, $\dots$, $A_{\ell-1}$. It directly follows from \eqref{eq:memory_probas} that this matrix is stochastic.
    \item[2.] The initial probability vector on the $n^\ell$ output sequences $i_0 \dots i_{\ell-1}$ is computed. Again, it follows from \eqref{eq:alg_eq_2} that this vector sums to 1.
    \item[3.] This probability is propagated $k-\ell+1$ times with the $\ell$-Markov chain transition matrix $\mathbf{P}_\ell$. The vector now contains entries labeled $i_{k-\ell+1} \dots i_{k}$ containing the approximated output joint probability from time $k-\ell+1$ to time $k$.
    \item[4.] The joint probability is marginalized so that the vector $(\tilde{\mathbf{v}}_{\ell, k})_{i_k}$ contains the approximated probabilities at time $k$, from which one may compute $\tilde{v}_{\ell,k}$.
\end{enumerate}
\begin{algorithm}
\caption{Compute $\tilde{v}_{\ell, k}$, the $\ell$-memory approximation of $v_k$}
\label{alg:l-approx} 
\begin{algorithmic}[1]
    \State Compute the $\ell$-memory transition probabilities of the output process $(Y^\mu_k)_{k \geq 0}$ of system \eqref{eq:system}, initialized at the invariant measure
    \begin{equation}  \label{eq:memory_probas}
        (\mathbf{P}_\ell)_{i_0\dots i_{\ell-1}, i_1 \dots i_\ell} := \mathbb{P}\left[Y_\ell^\mu = i_\ell \middle| Y^\mu_0 = i_0, \dots, Y^\mu_{\ell-1} = i_{\ell-1}\right].
    \end{equation} 
    \State Compute the initial $\ell$-long joint probabilities
    \begin{equation} \label{eq:alg_eq_2}
        (\tilde{\mathbf{v}}^\ell_0)_{i_0 \dots i_{\ell-1}} := \mathbb{P}\left[Y^{\lambda_0}_0 = i_0, \dots, Y^{\lambda_0}_{\ell-1} = i_{\ell-1}\right]
    \end{equation} 
    \State Propagate the $l$-long joint probabilities with the $\ell$-memory Markov model 
    \begin{equation} \label{eq:alg_eq_3}
        \tilde{\mathbf{v}}_{k - \ell + 1}^\ell = \left(\mathbf{P}_\ell^{k-\ell+1}\right)^\top \mathbf{\tilde{v}}^\ell_0
    \end{equation}
    \State Marginalize
    \begin{equation*}
        (\mathbf{\tilde{v}}_{\ell, k})_{i_k} := \sum_{i_{k - \ell + 1} = 1}^n \dots \sum_{i_{k-1} = 1}^n (\tilde{\mathbf{v}}^\ell_{k+\ell-1})_{i_{k-\ell+1} \dots i_{k-1} i_k}
    \end{equation*}
    \Return $\tilde{v}_{\ell, k}(x_k) = \sum_{i_k = 1}^n \dfrac{(\mathbf{\tilde{v}}_{\ell, k})_{i_k}}{\mathbb{P}[Y^\mu_k = i_k]} \chi_{A_{i_k}}(x_k)$. 
\end{algorithmic}
\end{algorithm}
The returned function in Algorithm~\ref{alg:l-approx} is a piecewise constant $\mu$-weighted probability density function, denoted $\tilde{v}_{\ell, k}$. It is such that 
\begin{equation*}
    \tilde{\lambda}_{\ell, k}(A) = \int_{x \in A} \tilde{v}_{\ell, k}(x) \mu(\text dx) \approx \int_{x \in A} v_k(x) \mu(\text dx) = \lambda_k(A)
\end{equation*}
for all $A \in \mathcal{B}(E)$, where $v_k = T^k v_0$ is the true $\mu$-weighted probability density function at time $k$, and $\approx$ denotes that we use $\tilde{\lambda}_{\ell, k}$ (or $\tilde{v}_{\ell, k}$) as an approximation of $\lambda_k$ (resp. $v_k$). This will be further explained in the next subsections.

\begin{remark}\label{rem:computing steady state}
    In practice, computing the invariant output probabilities \eqref{eq:memory_probas} can be done in at least two ways. Either one samples a sufficiently large number of ($\ell+1$)-long output traces initialized at the invariant distribution. Or, employing ergodicity and Birkhoff's theorem (see e.g. \cite{Shiryaev2016}), one samples a very large output trace initialized at any initial distribution. $\hfill \blacksquare$
\end{remark}

In the rest of this section, we introduce the mathematical formalism surrounding the construction of the abstraction.

\subsection{Lifted system}

Our approach is based on the study of the lifted state process $(X_k, \dots, X_{k+\ell-1})_{k \geq 0}$ and output process $(Y_k, \dots, Y_{k+\ell-1})_{k \geq 0}$. In the following subsections, we show that the abstraction constructed in Algorithm~\ref{alg:l-approx} is a Galerkin approximation of the transfer operator of this lifted process.  In this section, we formally define it along with its invariant distribution. We then conclude by making the link with the original system~\ref{eq:system}.

The lifted system is defined as
\begin{equation} \label{eq:system_lift}
\begin{cases}
    (x_{k +1}, \dots, x_{k+\ell}) \sim \tau^{\ell}(\cdot | x_{k}, \dots, x_{k + \ell - 1}), \\
    (x_0, \dots, x_{\ell-1}) \sim \lambda_0^\ell, \\
    (y_k, \dots, y_{k+\ell-1}) = (h(x_k), \dots, h(x_{k+\ell-1})).
\end{cases}
\end{equation}
In the definition above, for all $A_1, \dots, A_\ell \in \mathcal{B}(E)$ and all $x_0, \dots, x_{\ell-1} \in E$, the lifted kernel $\tau^\ell$ is defined as 
\begin{equation} \label{eq:kernel_lift}
\begin{aligned} 
    \tau^\ell (A_1 \times \dots \times A_\ell | x_0, \dots, x_{\ell-1} ) 
    = \begin{cases}
        \tau(A_\ell | x_{\ell-1}) &\text{if } x_1 \in A_1, \dots, x_{\ell-1} \in A_{\ell-1}, \\
        0 &\text{otherwise}.
    \end{cases}
\end{aligned}
\end{equation}
For all sets $A_0, \dots, A_{\ell-1} \in \mathcal{B}(E)$, the initial measure $\lambda_0^\ell$ is defined as 
\begin{equation*}
\begin{aligned}
    \lambda_0^\ell(A_0 \times \dots \times A_{\ell-1})
     = \int_{x_0 \in A_0} \dots \int _{x_{\ell-1 \in A_{\ell-1}}} \tau(\text dx_{\ell-1} | x_{\ell-2}) \dots \tau(\text dx_1 | x_0) \lambda_0(\text dx_0).
\end{aligned}
\end{equation*}


Owing to Assumption~\ref{ass:ex_un}, the lifted system admits a unique invariant measure $\mu^\ell$ (see e.g. \cite[Equation (4.1)]{Hairer2006ErgodicPO}), defined as 
\begin{equation}\label{eq:invariant_lift}
\begin{aligned} 
    \mu^\ell(A_0 \times \dots \times A_{\ell-1}) 
    = \int_{x_0 \in A_0} \dots \int_{x_{\ell-1} \in A_{\ell-1}} \tau(\text dx_{\ell-1} | x_{\ell-2}) \dots \tau(\text dx_1 | x_0) \mu(\text dx_0).
\end{aligned}
\end{equation}


Lifted system~\eqref{eq:system_lift} admits a transfer operator $T_\ell$, according to Definition~\ref{def:transfer}. The initial measure $\lambda^\ell_0$ admits a $\mu^\ell$-weighted probability density function, denoted $v^\ell_0(x_0, \dots, x_{\ell-1})$, which is a \emph{joint probability density function} on the first $\ell$ states. These joint densities are propagated with the lifted transfer operator, and, for all $k \geq \ell-1$,
\begin{equation*}
    v^\ell_{k-\ell+1}(x_{k-\ell+1}, \dots, x_k) = (T_\ell^{k-\ell+1}v_0)(x_{k-\ell+1}, \dots, x_k)
\end{equation*}
is the joint probability density on the states $x_{k-\ell+1}$ to $x_k$. The corresponding measure is denoted by $\lambda^\ell_{k - \ell + 1}$.

\begin{remark}[Notations]
    Study of joint measures and joint probability density functions are at the center of this work. We therefore draw the reader's attention on the fact that, all along the paper, we note joint measure (resp. $\mu^\ell$-weighted density) on $E^\ell$ with a superscript $\lambda^\ell$ (resp. $v^\ell$), whereas measures (resp. $\mu$-weighted densities) on $E$ are without any superscript $\lambda$ (resp. $v$). In contrast, superscripts on operators, e.g., $P^k$ or $T^k$, denote powers (or recursive applications of the operator).~$\hfill \blacksquare$
\end{remark}

\subsection{Abstraction}
In this section, we show that the transition matrix of the $\ell$-memory Markov chain constructed in Algorithm~\ref{alg:l-approx} corresponds to a Galerkin approximation of the transfer operator $T_\ell$. In particular, we specify the basis of functions with which we project $T_\ell$, and re-interpret Algorithm~\ref{alg:l-approx} in terms of functions and operators. Doing this will allow us to derive bounds in Section~\ref{sec:bounds} on the total variation distance between $v_k$ and the approximated function $\tilde{v}_{\ell, k}$ given by Algorithm~\ref{alg:l-approx}.

Given the output partition $A_1, \dots, A_n$ on the original state space $E$ (as defined in \eqref{eq:partition}), we consider the subspace of piecewise constant functions $D_{n}^\ell \subset L^2(\mu^\ell)$, defined as 
\begin{equation*}
    D^\ell_{n} := \text{span}\left(\left\{
        \psi_{i_1 \dots i_{\ell}}
    \right\}_{i_1, \dots, i_\ell \in [n]} \right), 
\end{equation*}
where 
\begin{equation} \label{eq:fix_denom}
    \psi_{i_1 \dots i_{\ell}}(x_1, \dots, x_{\ell})
    := 
    \frac{\chi_{A_{i_1}}(x_1) \dots \chi_{A_{i_\ell}}(x_\ell)}{\mathbb{P}[X^\mu_1 \in A_{i_1}, \dots, X^\mu_{\ell} \in A_{i_\ell}]}, 
\end{equation}
where we recall that $\chi_A$ denotes the indicator function of $A$. $D_{n}^\ell$ is therefore a set of piecewise constant functions on $E^\ell$.

In this work, we make the assumption that the denominator in \eqref{eq:fix_denom} is positive (as formally stated below in Assumption~\ref{ass:nonzero_denom}). We claim that this assumption is not restrictive for two main reasons. First, it holds in many practical cases such as unbounded noise. Second, it also suffices to assume that $\mathbb{P}[X^\mu_1 \in A_{i_1}, \dots, X^\mu_{\ell} \in A_{i_\ell}] = 0$ implies $\mathbb{P}[X^{\lambda_0}_1 \in A_{i_1}, \dots, X^{\lambda_0}_{\ell} \in A_{i_\ell}] = 0$. That is, zero measure steady state correspond to zero measure initial conditions. For the sake of brevity and simplicity, we leave this extension for further work.
\begin{assumption} \label{ass:nonzero_denom}
    The system~\eqref{eq:system} is such that its invariant measure $\mu$ satisfies 
    \begin{equation}
        \mathbb{P}[X^\mu_1 \in A_{i_1}, \dots, X^\mu_{\ell} \in A_{i_\ell}] > 0
    \end{equation}
    for all sequence $A_{i_1}, \dots, A_{i_\ell}$ of blocks of the output partition.
\end{assumption}

The following proposition shows that $\mathbf{P}_{\ell}$, the matrix built in Algorithm~\ref{alg:l-approx}, is the matrix representation of the Galerkin approximation of $T_\ell$. For the sake of readability, all proofs can be found in Appendix~\ref{sec:proofs}.
\begin{proposition} \label{prop:galerkin}
    Let $Q_\ell : L^2(\mu^\ell) \to D_n^\ell$ be a projection operator as defined in Definition~\ref{def:projection}, and let 
    \begin{equation*}
        P_\ell := Q_\ell T_\ell Q_\ell
    \end{equation*} 
    be the Galerkin approximation of $T_\ell$ on $D_{n}^\ell$. Then it holds that 
    \begin{equation} \label{eq:prop1_to_prove}
    \begin{aligned}
        P_\ell \psi_{i_0 \dots i_{\ell-1}} 
        = \sum_{i_\ell = 1}^n \mathbb{P}\left[Y_\ell^\mu = i_\ell \middle| Y^\mu_0 = i_0, \dots, Y^\mu_{\ell-1} = i_{\ell-1}\right] \psi_{i_1 \dots i_\ell}
    \end{aligned}
    \end{equation}
    for all $i_0, \dots, i_{\ell-1} \in [n]$. Therefore $\mathbf{P}_\ell$, as defined in \eqref{eq:memory_probas}, is the matrix representation of $P_\ell$.
\end{proposition}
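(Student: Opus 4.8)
The plan is to compute $P_\ell \psi_{i_0 \dots i_{\ell-1}}$ directly from the definition $P_\ell = Q_\ell T_\ell Q_\ell$, and then identify the resulting coefficients with the conditional probabilities in \eqref{eq:memory_probas}. First I would note that $\psi_{i_0\dots i_{\ell-1}}$ is a basis element of $D_n^\ell$, so $Q_\ell \psi_{i_0\dots i_{\ell-1}} = \psi_{i_0\dots i_{\ell-1}}$; the first projection is therefore trivial. The work lies in computing $T_\ell \psi_{i_0\dots i_{\ell-1}}$ and then projecting it back onto $D_n^\ell$. For the first step, I would apply the defining identity of the transfer operator from Definition~\ref{def:transfer} to the lifted system: for any box $B = A_{j_1}\times\dots\times A_{j_\ell}$,
\begin{equation*}
    \int_{B} (T_\ell \psi_{i_0\dots i_{\ell-1}})(y)\, \mu^\ell(\mathrm dy) = \int_{E^\ell} \tau^\ell(B \mid x_0,\dots,x_{\ell-1})\, \psi_{i_0\dots i_{\ell-1}}(x_0,\dots,x_{\ell-1})\, \mu^\ell(\mathrm dx).
\end{equation*}
Plugging in the explicit form of $\tau^\ell$ from \eqref{eq:kernel_lift} (which forces $x_1\in A_{j_1},\dots,x_{\ell-1}\in A_{j_{\ell-1}}$ and contributes $\tau(A_{j_\ell}\mid x_{\ell-1})$) and the explicit form of $\psi_{i_0\dots i_{\ell-1}}$ from \eqref{eq:fix_denom} (which forces $x_0\in A_{i_0},\dots,x_{\ell-1}\in A_{i_{\ell-1}}$ and divides by the steady-state joint probability), and unfolding $\mu^\ell$ via \eqref{eq:invariant_lift} as an iterated integral against $\tau$ and $\mu$, one sees the box integral vanishes unless $j_1=i_1,\dots,j_{\ell-1}=i_{\ell-1}$, and otherwise equals $\mathbb{P}[X_0^\mu\in A_{i_0},\dots,X_{\ell-1}^\mu\in A_{i_{\ell-1}}, X_\ell^\mu \in A_{j_\ell}]$ divided by $\mathbb{P}[X_0^\mu\in A_{i_0},\dots,X_{\ell-1}^\mu\in A_{i_{\ell-1}}]$, i.e.\ exactly $\mathbb{P}[Y_\ell^\mu = j_\ell \mid Y_0^\mu = i_0,\dots,Y_{\ell-1}^\mu = i_{\ell-1}]$.

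Next I would argue that this computation already determines $Q_\ell T_\ell \psi_{i_0\dots i_{\ell-1}}$. Since $D_n^\ell$ is spanned by the indicator-type functions $\psi_{j_1\dots j_\ell}$, the orthogonal projection $Q_\ell g$ of any $g\in L^2(\mu^\ell)$ is the unique piecewise-constant function on the grid $\{A_{j_1}\times\dots\times A_{j_\ell}\}$ whose integral over each box agrees with that of $g$; equivalently, $Q_\ell g = \sum_{j_1\dots j_\ell} c_{j_1\dots j_\ell}\psi_{j_1\dots j_\ell}$ with $c_{j_1\dots j_\ell} = \int_{A_{j_1}\times\dots\times A_{j_\ell}} g \,\mathrm d\mu^\ell$, because $\int \psi_{j_1\dots j_\ell}\,\mathrm d\mu^\ell = 1$ over its own box and $0$ over the others. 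Applying this with $g = T_\ell\psi_{i_0\dots i_{\ell-1}}$ and the box integrals computed above, only the boxes with $j_1=i_1,\dots,j_{\ell-1}=i_{\ell-1}$ survive, and the coefficient of $\psi_{i_1\dots i_{\ell-1} j_\ell}$ is $\mathbb{P}[Y_\ell^\mu = j_\ell\mid Y_0^\mu=i_0,\dots,Y_{\ell-1}^\mu=i_{\ell-1}]$. Relabelling $j_\ell$ as $i_\ell$ yields \eqref{eq:prop1_to_prove}. The final sentence — that $\mathbf{P}_\ell$ is the matrix representation of $P_\ell$ — then follows immediately from the convention in Section~\ref{subsec:galerkin}: if $P_\ell\psi_{i_0\dots i_{\ell-1}} = \sum_{i_\ell}(\mathbf{P}_\ell)_{i_0\dots i_{\ell-1},\, i_1\dots i_\ell}\,\psi_{i_1\dots i_\ell}$, then $\mathbf{P}_\ell$ as defined in \eqref{eq:memory_probas} is precisely the matrix of $P_\ell$ in the basis $\{\psi_{i_1\dots i_\ell}\}$ (reading off columns from the expansion).

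The main obstacle I anticipate is bookkeeping rather than conceptual: carefully unfolding $\mu^\ell$ and $\tau^\ell$ as nested integrals and checking that the indicator constraints from $\psi_{i_0\dots i_{\ell-1}}$, from $\tau^\ell$, and from the box $B$ interlock correctly so that the numerator becomes the $(\ell+1)$-fold joint probability $\mathbb{P}[X_0^\mu\in A_{i_0},\dots,X_\ell^\mu\in A_{i_\ell}]$ and the normalising denominator in \eqref{eq:fix_denom} is exactly what is needed to turn the ratio into a conditional probability. A secondary subtlety worth a remark is well-definedness: the denominators in \eqref{eq:fix_denom} and in the conditional probabilities are nonzero by Assumption~\ref{ass:nonzero_denom}, and one should note that the $\psi_{i_1\dots i_\ell}$ are genuinely linearly independent (their supports are disjoint), so the matrix representation is unambiguous. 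Everything else is a direct substitution.
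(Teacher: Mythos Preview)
Your proposal is correct and follows essentially the same route as the paper's proof: both compute $P_\ell\psi_{i_0\dots i_{\ell-1}} = Q_\ell T_\ell\psi_{i_0\dots i_{\ell-1}}$ by evaluating $\int_{A_{j_1}\times\dots\times A_{j_\ell}} T_\ell\psi_{i_0\dots i_{\ell-1}}\,\mathrm d\mu^\ell$ via the defining identity of the transfer operator and the explicit form of $\tau^\ell$, obtaining the ratio $\mathbb{P}[X_0^\mu\in A_{i_0},\dots,X_\ell^\mu\in A_{j_\ell}]/\mathbb{P}[X_0^\mu\in A_{i_0},\dots,X_{\ell-1}^\mu\in A_{i_{\ell-1}}]$. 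The only cosmetic difference is that the paper introduces an auxiliary orthonormal basis $\phi_{i_1\dots i_\ell}=\chi_{A_{i_1}\times\dots\times A_{i_\ell}}/\sqrt{\mu^\ell(A_{i_1}\times\dots\times A_{i_\ell})}$ to write $Q_\ell g=\sum\langle g,\phi\rangle\phi$, whereas you use the equivalent conditional-expectation characterization $Q_\ell g=\sum\bigl(\int_{\text{box}}g\,\mathrm d\mu^\ell\bigr)\psi_{\text{box}}$ directly; the two formulations are identical once you note $\phi=\sqrt{\mu^\ell(\text{box})}\,\psi$.
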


We can therefore re-interpret Algorithm~\ref{alg:l-approx} through the lens of transfer operators and their Galerkin approximations. Figure~\ref{fig:method_summary} summarizes this interpretation. First, we compute $P_\ell$, the Galerkin approximation of $T_\ell$, whose matrix representation is $\mathbf{P}_\ell$, as computed in \eqref{eq:memory_probas} in Algorithm~\ref{alg:l-approx}. Second, we compute the piecewise approximate initial density
\begin{equation*}
\begin{aligned}
    v_0^\ell(x_0, \dots, x_{\ell-1}) \approx 
    \tilde{v}_0^\ell(x_0, \dots, x_{\ell-1}) = (Q_\ell v_0^\ell)(x_0, \dots, x_{\ell-1}), 
\end{aligned}
\end{equation*}
whose vector representation is $\tilde{\mathbf{v}}^\ell_0$, computed in \eqref{eq:alg_eq_2} in Algorithm~\ref{alg:l-approx}. Third, we approximate the joint density from $x_{k-\ell+1}$ to $x_k$ with
\begin{equation} \label{eq:approximate_lift_pdf}
\begin{aligned}
    v^\ell_{k - \ell + 1}(x_{k-\ell+1}, \dots, x_k) \approx \tilde{v}^\ell_{k - \ell + 1}(x_{k-\ell+1}, \dots, x_k) 
    = (P_\ell^{k - \ell + 1}v_0^\ell)(x_{k-\ell+1}, \dots, x_k). 
\end{aligned}
\end{equation}
Since $P_\ell = Q_\ell T_\ell Q_\ell$, it holds that 
\begin{equation*}
    P_\ell^{k - \ell + 1}v_0^\ell = P_\ell^{k - \ell + 1} (Q_\ell v_0^\ell) = P_\ell^{k - \ell + 1} \tilde{v}_0^\ell, 
\end{equation*}
and therefore that the vector representation of $\tilde{v}^\ell_{k - \ell + 1}$ is $\tilde{\mathbf{v}}^\ell_{k-\ell+1}$, as computed in \eqref{eq:alg_eq_3}. Finally, we marginalize $\tilde{v}^\ell_{k-\ell+1}(x_{k-\ell+1}, \dots, x_k)$ to get $\tilde{v}_{\ell, k}(x_k)$, whose vector representation in $D_n$ is $\tilde{\mathbf{v}}_{\ell, k}$. 


\begin{figure*}
    \includegraphics[width = \textwidth]{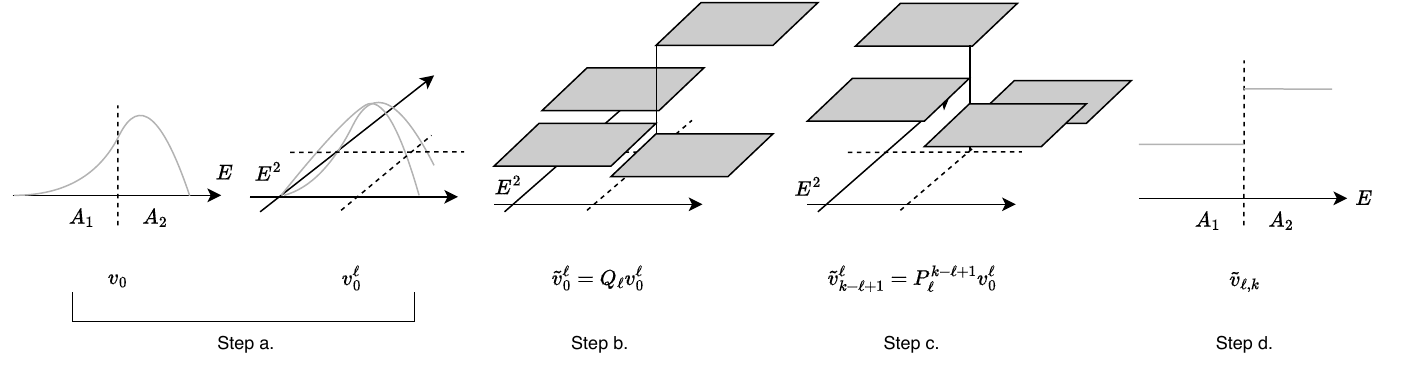}
    \caption{Summary of the method described in Algorithm~\ref{alg:l-approx}, from left to right. Step a) We consider the lifted process defined in \eqref{eq:system_lift}. Step b) We project the $x_0$-to-$x_{\ell-1}$ joint distribution on a finitely generated space $D_n^\ell$. Step c) We propagate it with the Galerkin approximation $P_\ell$ to get the approximate $x_{k-\ell+1}$-to-$x_k$ joint distribution. Step d) We marginalize it to retrieve the approximate density at $x_k$.}
    \label{fig:method_summary}
\end{figure*}

\section{Total variation guarantees} \label{sec:bounds}
In this section we upper bound $\mathsf{TV}(v_k, \tilde{v}_{\ell, k})$, the total variation between the true density $v_k$ and $\tilde{v}_{\ell, k}$, the approximated density given by Algorithm~\ref{alg:l-approx}, thereby providing formal guarantees on the correctness of our approach.
To derive our bounds, we require certain assumptions on the spectrum of the lifted transfer operator $T_\ell$, which are equivalent to those considered in  \cite{Sarich2010} and  formally defined in Assumption~\ref{ass:spectral}.

\begin{assumption} \label{ass:spectral}
    The lifted transfer operator $T_\ell$ admits $m$ real eigenvalues $e_{\ell, 1}, \dots, e_{\ell, m}$ such that $1 = e_{\ell, 0} > e_{\ell, 1} \geq \dots e_{\ell, m}$, together with an orthonormal set of eigenfunctions $\mathds{1} = u_0^\ell, u_1^\ell, \dots, u^\ell_m$ (see Definition~\ref{def:spectrum}). Moreover, for some $r_\ell < |e_{\ell, m}|$, all remaining eigenvalues $e_{\ell, i} \in \mathbb{C}$ for $i > m$ are such that $|e_{\ell, i}| < r_\ell$. Finally let $\Pi_\ell : L^2(\mu^\ell) \to L^2(\mu^\ell)$ be the operator defined by 
    \begin{equation}
        \Pi_\ell f^\ell = \sum_{i = 1}^m \langle f^\ell, u^\ell_i \rangle u_i^\ell. 
    \end{equation}
    We assume that $\Pi_\ell T_\ell = \Pi_\ell T_\ell \Pi_\ell$.
\end{assumption}

\begin{remark}
    Assumption~\ref{ass:spectral} is related but not the same as Assumption~\ref{ass:ex_un} and Assumption~\ref{ass:ergodic}. As discussed in \cite{Sarich2010}, sufficient conditions for Assumption~\ref{ass:spectral} to hold are \emph{reversibility} and \emph{sufficient ergodicity} (such as defined in \cite[Remark~2.1]{Sarich2010}), which are stronger assumptions than Assumption~\ref{ass:ex_un} and Assumption~\ref{ass:spectral}. As stated in \cite{Sarich2010,Kontoyiannis2011}, reversibility and sufficient ergodicity are natural and satisfied for a large class of dynamical systems.$\hfill \blacksquare$
\end{remark}

Similarly as in \cite{Sarich2010}, our bound relies on the quantity
\begin{equation} \label{eq:delta_lift}
    \delta_\ell := \max_{i = 1, \dots, m} \|Q_\ell u_i^\ell - u_i^\ell\|_2, 
\end{equation}
which quantifies the maximal projection error on the spectrum. 

Our total variation bound consists of two components. The first component increases with $k$ and arises from the cumulative projection errors, becoming more conservative as $k$ grows. On the other hand, the second component is characterized by the convergence of both the true density and the approximated one towards the invariant density $\mathds{1}$, and decreases with $k$. Unlike the first component, it is initially conservative but tightens progressively over time. The increasing and decreasing components are studied respectively in Theorem~\ref{thm:increasing} and Theorem~\ref{thm:decreasing} for the joint densities, and the final bound is given for the marginalized densities in Corollary~\ref{cor:final_bound}. 
We stress that both Theorem~\ref{thm:increasing} and Theorem~\ref{thm:decreasing} bound the same quantity. However, these bounds are complementary, as one is producing better bounds for small $k$, while the other for large $k.$

\begin{theorem}[Increasing] \label{thm:increasing}
    For any memory $\ell \geq 1$, horizon $k \geq \ell$ and initial joint density $v^\ell_{0} \in L^2(\mu^\ell)$, 
    let $\lambda^{\ell}_{k-\ell+1}$ and $\tilde{\lambda}^{\ell}_{k-\ell+1}$ be the joint measures respectively defined by 
    \begin{equation*}
    \begin{aligned}
        v^\ell_{k-\ell+1} &= T_\ell^{k-\ell+1}v_0^\ell, \\
        \tilde{v}^\ell_{k-\ell+1} &= P_\ell^{k-\ell+1}v_0^\ell, 
    \end{aligned}
    \end{equation*}
    and similarly for $\lambda^\ell_{k-\ell}$ and $\tilde{\lambda}^\ell_{k-\ell}$. Then, if Assumption~\ref{ass:spectral} is satisfied, it holds that 
    \begin{equation} \label{eq:def_theorem_measures}
    \begin{aligned}
        \mathsf{TV}(\lambda^\ell_{k - \ell + 1}, \tilde{\lambda}^\ell_{k - \ell + 1}) \leq \mathsf{TV}(\lambda^\ell_{k - \ell}, \tilde{\lambda}^\ell_{k - \ell}) + \frac{1}{2}\left(m e_{\ell, 1} \delta_\ell + r_\ell\right) e_{\ell, 1}^{k-\ell} \|v_0^\ell\|_2, 
    \end{aligned}
    \end{equation}
    where $\delta_{\ell}$ is defined in \eqref{eq:delta_lift}. 
\end{theorem}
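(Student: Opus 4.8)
The plan is to establish the one-step recursive inequality by inserting an intermediate measure and applying the triangle inequality for the total variation distance. Concretely, I would use the identity $\mathsf{TV}(\lambda_1, \lambda_2) = \tfrac12 \|v_1 - v_2\|_1$ stated in the preliminaries, and write
\begin{equation*}
    \tfrac12\|v^\ell_{k-\ell+1} - \tilde v^\ell_{k-\ell+1}\|_1
    = \tfrac12\|T_\ell \, v^\ell_{k-\ell} - P_\ell \, \tilde v^\ell_{k-\ell}\|_1
    \leq \tfrac12\|T_\ell v^\ell_{k-\ell} - P_\ell v^\ell_{k-\ell}\|_1 + \tfrac12\|P_\ell(v^\ell_{k-\ell} - \tilde v^\ell_{k-\ell})\|_1.
\end{equation*}
The second summand is bounded by $\tfrac12\|v^\ell_{k-\ell} - \tilde v^\ell_{k-\ell}\|_1 = \mathsf{TV}(\lambda^\ell_{k-\ell}, \tilde\lambda^\ell_{k-\ell})$ provided $P_\ell$ is a (sub)stochastic operator in the sense of being an $L^1(\mu^\ell)$-contraction; since $P_\ell$ is the matrix representation of a stochastic matrix (Proposition~\ref{prop:galerkin} and Step 1 of Algorithm~\ref{alg:l-approx}), this holds — I would need to spell out that a column-/row-stochastic propagation does not increase the $L^1$ norm of (signed) density vectors. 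This gives the first term $\mathsf{TV}(\lambda^\ell_{k-\ell}, \tilde\lambda^\ell_{k-\ell})$ of the claimed bound.

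The remaining work is to bound the ``one-step Galerkin defect'' $\tfrac12\|(T_\ell - P_\ell) v^\ell_{k-\ell}\|_1$ by $\tfrac12(m e_{\ell,1}\delta_\ell + r_\ell)\, e_{\ell,1}^{k-\ell}\|v_0^\ell\|_2$. Here I would first pass from the $\|\cdot\|_1$ norm to the $\|\cdot\|_2$ norm — the paper has set up $L^2(\mu^\ell) \subseteq L^1(\mu^\ell)$, but to get a clean bound one typically uses $\|f\|_1 \le \|f\|_2$ when $\mu^\ell$ is a probability measure (by Cauchy–Schwarz against $\mathds 1$); this is presumably why the statement is phrased with $\|v_0^\ell\|_2$. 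Then, writing $T_\ell - P_\ell = T_\ell - Q_\ell T_\ell Q_\ell$ and using the spectral decomposition $T_\ell = \Pi_\ell T_\ell + (T_\ell - \Pi_\ell T_\ell)$ from Assumption~\ref{ass:spectral}, I would split $(T_\ell - P_\ell)v^\ell_{k-\ell}$ into a ``dominant spectral'' part, controlled by the projection errors $\delta_\ell$ of the $m$ eigenfunctions together with the top eigenvalue $e_{\ell,1}$ (yielding the $m e_{\ell,1}\delta_\ell$ factor), and a ``tail'' part, controlled by the spectral gap constant $r_\ell$. The factor $e_{\ell,1}^{k-\ell}$ comes from propagating $v_0^\ell$ for $k-\ell$ steps and using that, modulo the invariant component $\mathds 1$ which cancels in the difference $T_\ell - P_\ell$, the decay is governed by the subdominant eigenvalue $e_{\ell,1}$. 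This is essentially the argument already carried out in \cite{Sarich2010}, adapted to the lifted operator $T_\ell$ and to a one-step increment rather than the full horizon.

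I expect the main obstacle to be the careful bookkeeping in the spectral split: making precise how $Q_\ell$ interacts with $\Pi_\ell$ and with the tail part, so that the errors genuinely separate into the $\delta_\ell$-term (for the $m$ retained eigenfunctions) and the $r_\ell$-term (for the rest), while simultaneously extracting the clean geometric factor $e_{\ell,1}^{k-\ell}\|v_0^\ell\|_2$ rather than something like $\|v^\ell_{k-\ell}\|_2$. In particular one must use that $v_0^\ell - \langle v_0^\ell, \mathds 1\rangle \mathds 1$ is the only part that gets propagated with decay rate $\le e_{\ell,1}$, that $Q_\ell \mathds 1 = \mathds 1$ (so the invariant part is preserved exactly and drops out of $T_\ell - P_\ell$), and that $\|\Pi_\ell\|_2 \le 1$ and $\|Q_\ell\|_2 = 1$. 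The auxiliary $L^1$-to-$L^2$ passage and the contractivity of $P_\ell$ on $L^1$ are routine once the setup is acknowledged; the spectral estimate is where the real content lies, and I would lean on the intermediate lemmas of \cite{Sarich2010} to shortcut it.
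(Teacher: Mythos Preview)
Your overall architecture (triangle inequality, a contractivity step, and a one-step spectral/Galerkin defect estimate passing from $L^1$ to $L^2$) matches the paper's. The substantive difference is the choice of intermediate point, and that choice is where your proposal runs into trouble.

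You split at $P_\ell v^\ell_{k-\ell}$, so the recursive term uses $\|P_\ell\|_1 \le 1$ and the defect term is $\|(T_\ell - P_\ell)\,v^\ell_{k-\ell}\|$. The paper instead splits at $T_\ell \tilde v^\ell_{k-\ell}$, so the recursive term uses $\|T_\ell\|_1 = 1$ (their Lemma~\ref{lemma:unitary_transfer}) and the defect term is $\|(T_\ell - Q_\ell T_\ell)\,\tilde v^\ell_{k-\ell}\|$. This is not cosmetic: because $\tilde v^\ell_{k-\ell} \in D_n^\ell$ one has $Q_\ell \tilde v^\ell_{k-\ell} = \tilde v^\ell_{k-\ell}$, so the defect collapses exactly to $Q_\ell^\perp T_\ell Q_\ell$ acting on $(T_\ell Q_\ell)^{k-\ell} v_0^\ell - \mathds 1$. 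Both factors are then precisely the objects bounded in \cite{Sarich2010} (their Eq.~(63) for $\|Q_\ell^\perp T_\ell Q_\ell\|_2$ and Lemma~\ref{lemma:sarich} for the decay), yielding the constant $m e_{\ell,1}\delta_\ell + r_\ell$ with no extra bookkeeping.

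With your split, the defect is $(T_\ell - Q_\ell T_\ell Q_\ell)$ applied to $v^\ell_{k-\ell}$, which is \emph{not} in $D_n^\ell$. Writing $T_\ell - P_\ell = Q_\ell^\perp T_\ell + Q_\ell T_\ell Q_\ell^\perp$ you pick up a term $Q_\ell T_\ell Q_\ell^\perp (v^\ell_{k-\ell} - \mathds 1)$, and $\|Q_\ell^\perp(v^\ell_{k-\ell} - \mathds 1)\|_2$ is in general of order $\|v^\ell_{k-\ell} - \mathds 1\|_2$, not of order $\delta_\ell$. So your spectral estimate degrades to something like $e_{\ell,1}\cdot e_{\ell,1}^{k-\ell}\|v_0^\ell\|_2$ rather than $(m e_{\ell,1}\delta_\ell + r_\ell)\,e_{\ell,1}^{k-\ell}\|v_0^\ell\|_2$; the $\delta_\ell$-smallness is lost. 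The Sarich bound you want to ``shortcut'' to is specifically about $Q_\ell^\perp T_\ell Q_\ell$, and you only land on that operator if the input already lies in $D_n^\ell$.

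The fix is simply to swap the intermediate: insert $T_\ell \tilde v^\ell_{k-\ell}$ instead of $P_\ell v^\ell_{k-\ell}$. Then the contraction step uses $T_\ell$ (which is the content of Lemma~\ref{lemma:unitary_transfer}), and the defect acts on the projected iterate, after which everything you wrote about the $L^1$-to-$L^2$ passage, cancellation of $\mathds 1$, and leaning on the Sarich lemmas goes through verbatim.
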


\begin{theorem}[Decreasing] \label{thm:decreasing}
    For any memory $\ell \geq 1$, horizon $k \geq \ell$ and initial joint density $v^\ell_{0} \in L^2(\mu^\ell)$, let $\lambda^{\ell}_{k-\ell+1}$ and $\lambda^{\ell}_{k-\ell+1}$ be the joint measures respectively defined as in \eqref{eq:def_theorem_measures}. Then, if Assumption~\ref{ass:spectral} is satisfied, it holds that 
    \begin{equation*}
        \mathsf{TV}(\lambda^\ell_{k - \ell + 1}, \tilde{\lambda}^\ell_{k - \ell + 1}) \leq e_{\ell, 1}^{k - \ell + 1} \|v_0^\ell\|_2.
    \end{equation*}
\end{theorem}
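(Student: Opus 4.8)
The goal is to bound $\mathsf{TV}(\lambda^\ell_{k-\ell+1}, \tilde\lambda^\ell_{k-\ell+1})$ by $e_{\ell,1}^{k-\ell+1}\|v_0^\ell\|_2$. The key observation is that both $v^\ell_{k-\ell+1} = T_\ell^{k-\ell+1}v_0^\ell$ and $\tilde v^\ell_{k-\ell+1} = P_\ell^{k-\ell+1}v_0^\ell$ are $\mu^\ell$-weighted probability density functions, so by the identity recalled just before the Transfer operator subsection, $\mathsf{TV}(\lambda^\ell_{k-\ell+1}, \tilde\lambda^\ell_{k-\ell+1}) = \tfrac12\|v^\ell_{k-\ell+1} - \tilde v^\ell_{k-\ell+1}\|_1 \le \tfrac12\|v^\ell_{k-\ell+1} - \mathds{1}\|_1 + \tfrac12\|\tilde v^\ell_{k-\ell+1} - \mathds{1}\|_1$ by the triangle inequality, since $\mathds{1}$ is the invariant density of the lifted system. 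So it suffices to show each of $\tfrac12\|T_\ell^{k-\ell+1}v_0^\ell - \mathds{1}\|_1$ and $\tfrac12\|P_\ell^{k-\ell+1}v_0^\ell - \mathds{1}\|_1$ is at most $\tfrac12 e_{\ell,1}^{k-\ell+1}\|v_0^\ell\|_2$; this would give a constant $2$ too much, so more likely the intended route keeps things together and extracts a single factor.

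Here is the cleaner line I would actually pursue. Write $n := k-\ell+1$. Since $\mathds 1 = u_0^\ell$ is the leading eigenfunction and $\langle v_0^\ell, \mathds 1\rangle = \int v_0^\ell\, \mathrm d\mu^\ell = 1$ (it is a density), decompose $v_0^\ell = \mathds 1 + w_0$ where $w_0 := v_0^\ell - \mathds 1 \perp \mathds 1$. Both $T_\ell$ and $P_\ell$ fix $\mathds 1$ (the latter because $Q_\ell \mathds 1 = \mathds 1$, as $\mathds 1 \in D_n^\ell$, and $T_\ell\mathds 1 = \mathds 1$), so $T_\ell^n v_0^\ell - \tilde v_{n}^\ell = T_\ell^n w_0 - P_\ell^n w_0$... but this reintroduces the projection error and gives Theorem~\ref{thm:increasing}, not the decreasing bound. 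For the decreasing bound the trick must instead be: bound $\|T_\ell^n v_0^\ell - \mathds 1\|_1 = \|T_\ell^n w_0\|_1 \le \|T_\ell^n w_0\|_2$ (using $L^2 \subseteq L^1$ with the convention on $\mu^\ell$, which makes the $L^1$-norm dominated by the $L^2$-norm since $\mu^\ell$ is a probability measure) and similarly for $P_\ell$; then use that on the orthogonal complement of $\mathds 1$ the operator norms satisfy $\|T_\ell\|_2 \le e_{\ell,1}$ and $\|P_\ell\|_2 \le \|Q_\ell\|_2\|T_\ell\|_2\|Q_\ell\|_2 \le e_{\ell,1}$ (projections are contractive and $Q_\ell$ maps the orthocomplement into itself since $\mathds 1 \in D_n^\ell$), so $\|T_\ell^n w_0\|_2, \|P_\ell^n w_0\|_2 \le e_{\ell,1}^n\|w_0\|_2 \le e_{\ell,1}^n\|v_0^\ell\|_2$ (the last since $w_0 \perp \mathds 1$ gives $\|w_0\|_2 \le \|v_0^\ell\|_2$). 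Combining, $\mathsf{TV} \le \tfrac12(e_{\ell,1}^n\|v_0^\ell\|_2 + e_{\ell,1}^n\|v_0^\ell\|_2) = e_{\ell,1}^n\|v_0^\ell\|_2$, exactly the claimed bound.

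The steps in order: (i) invoke the TV–$L^1$ identity for $\mu^\ell$-weighted densities and the triangle inequality through $\mathds 1$; (ii) verify $\mathds 1$ is the invariant density of the lifted system and is fixed by both $T_\ell$ and $P_\ell$, using $\mathds 1 \in D_n^\ell$; (iii) write $v_0^\ell - \mathds 1 \perp \mathds 1$ and bound $\|T_\ell^n(v_0^\ell-\mathds1)\|_1 \le \|T_\ell^n(v_0^\ell-\mathds1)\|_2$, and likewise for $P_\ell$; (iv) bound the relevant operator norms restricted to $\{\mathds 1\}^\perp$ by $e_{\ell,1}$ — for $T_\ell$ this is where Assumption~\ref{ass:spectral} enters (the spectral gap: the part of the spectrum off $\mathds 1$ is bounded in modulus by $\max(e_{\ell,1}, r_\ell) = e_{\ell,1}$), for $P_\ell$ it follows from contractivity of $Q_\ell$ and $Q_\ell\{\mathds1\}^\perp \subseteq \{\mathds1\}^\perp$; (v) collect the geometric decay $e_{\ell,1}^n$ and the norm estimate $\|v_0^\ell-\mathds1\|_2 \le \|v_0^\ell\|_2$.

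The main obstacle is step (iv) for $T_\ell$: concluding that the $L^2$-operator-norm of $T_\ell$ restricted to $\{\mathds 1\}^\perp$ is exactly $\le e_{\ell,1}$. This is immediate if $T_\ell$ is self-adjoint (reversible case) via the spectral theorem, but Assumption~\ref{ass:spectral} only posits finitely many real eigenvalues with orthonormal eigenfunctions plus the condition $\Pi_\ell T_\ell = \Pi_\ell T_\ell \Pi_\ell$. One would use $\Pi_\ell$ to split any $w_0 \perp \mathds 1$ as $\Pi_\ell w_0 + (I-\Pi_\ell)w_0$: on $\mathrm{ran}\,\Pi_\ell$ the action of $T_\ell$ is diagonal with eigenvalues $e_{\ell,1},\dots,e_{\ell,m}$ all $\le e_{\ell,1}$ in modulus, and on the complement the remaining spectrum has modulus $< r_\ell \le e_{\ell,1}$; the commutation hypothesis is exactly what is needed to ensure $T_\ell$ respects this splitting so the two pieces don't interact and the bound on $\|T_\ell^n w_0\|_2$ follows. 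I expect the authors either invoke this directly or cite the corresponding estimate from \cite{Sarich2010}; either way, pinning down that the "off-$\mathds1$" operator norm is controlled by $e_{\ell,1}$ (and not merely the spectral radius) is the delicate point, and it is plausible the proof leans on a lemma from \cite{Sarich2010} rather than reproving it.
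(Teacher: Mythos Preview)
Your proposal is correct and follows essentially the same route as the paper: triangle inequality through the invariant density $\mathds 1$, then H\"older ($\|\cdot\|_1 \le \|\cdot\|_2$) on each half, then the spectral-gap bound $e_{\ell,1}^{k-\ell+1}\|v_0^\ell\|_2$ on each piece, summing to the claimed constant. Your anticipation in step~(iv) is exactly right: the paper invokes Assumption~\ref{ass:spectral} directly for the $T_\ell$-half and cites \cite[Lemma~2.2]{Sarich2010} (stated in the paper as Lemma~\ref{lemma:sarich}) for the $P_\ell$-half rather than reproving contractivity; your direct argument via $Q_\ell\{\mathds 1\}^\perp \subseteq \{\mathds 1\}^\perp$ and $\|Q_\ell\|_2 = 1$ is a fine alternative to that citation.
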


Note that the following corollary is a result on the final marginalized densities, and not the joint densities such as in Theorem~\ref{thm:increasing} and Theorem~\ref{thm:decreasing}.
\begin{corollary} \label{cor:final_bound}
    For any memory $\ell \geq 1$, horizon $k \geq \ell$ and initial joint density $v^\ell_{0} \in L^2(\mu^\ell)$, let 
    \begin{equation*}
    \begin{aligned}
        \overline{\mathsf{TV}}_{\text{inc}} &:= \mathsf{TV}(\lambda_0^\ell, \tilde{\lambda}_0^\ell) +
        \left(m e_{\ell, 1} \delta_\ell + r_\ell\right)
        \frac{1-e_{\ell, 1}^{k-\ell+1}}{1 - e_{\ell, 1}}\|v_0^\ell\|_2, \\
        \overline{\mathsf{TV}}_{\text{dec}} &:= e_{\ell, 1}^{k-\ell+1}\|v_0^\ell\|_2, 
    \end{aligned}
    \end{equation*}
    where $\delta_{\ell}$ is defined in \eqref{eq:delta_lift}. Also, let $\lambda_k$ and $\tilde{\lambda}_{\ell, k}$ be the measures respectively defined by
    \begin{equation*}
    \begin{aligned}
        &v_k = T^k v_0, \\
        &\tilde{v}_{\ell, k} \text{ output of Algorithm~\ref{alg:l-approx}}.
    \end{aligned}
    \end{equation*}
    Then it holds that 
    \begin{equation*}
        \mathsf{TV}(\lambda_k, \tilde{\lambda}_{\ell, k}) \leq  
        \min \left\{ \overline{\mathsf{TV}}_{\text{inc}}, \overline{\mathsf{TV}}_{\text{dec}} \right\}. 
    \end{equation*}
\end{corollary}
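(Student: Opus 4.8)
The plan is to derive Corollary~\ref{cor:final_bound} as a fairly direct consequence of Theorem~\ref{thm:increasing} and Theorem~\ref{thm:decreasing}, the only genuinely new ingredient being that total variation does not increase under marginalization. First I would unroll the recursion in Theorem~\ref{thm:increasing}: applying \eqref{eq:def_theorem_measures} at horizons $k, k-1, \dots, \ell$ and telescoping the $\mathsf{TV}(\lambda^\ell_{\cdot}, \tilde\lambda^\ell_{\cdot})$ terms yields
\begin{equation*}
    \mathsf{TV}(\lambda^\ell_{k-\ell+1}, \tilde\lambda^\ell_{k-\ell+1}) \leq \mathsf{TV}(\lambda^\ell_0, \tilde\lambda^\ell_0) + \frac12\left(m e_{\ell,1}\delta_\ell + r_\ell\right)\|v_0^\ell\|_2 \sum_{j=0}^{k-\ell} e_{\ell,1}^{j},
\end{equation*}
and the finite geometric sum $\sum_{j=0}^{k-\ell} e_{\ell,1}^j = \frac{1-e_{\ell,1}^{k-\ell+1}}{1-e_{\ell,1}}$ (valid since $e_{\ell,1}<1$) gives exactly $\overline{\mathsf{TV}}_{\mathrm{inc}}$, up to the factor $\tfrac12$ which I would need to reconcile with how $\overline{\mathsf{TV}}_{\mathrm{inc}}$ is written — I expect the $\tfrac12$ either is absorbed by the $\mathsf{TV}$-vs-$L^1$-norm identity used inside the theorem proofs or should appear, and I would state the bound consistently with the theorems as quoted. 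Theorem~\ref{thm:decreasing} directly gives $\mathsf{TV}(\lambda^\ell_{k-\ell+1}, \tilde\lambda^\ell_{k-\ell+1}) \leq \overline{\mathsf{TV}}_{\mathrm{dec}}$. Taking the minimum of the two bounds on the same quantity is then immediate.

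The second step is the passage from the joint measures $\lambda^\ell_{k-\ell+1}, \tilde\lambda^\ell_{k-\ell+1}$ on $E^\ell$ to the marginal measures $\lambda_k, \tilde\lambda_{\ell,k}$ on $E$. Here I would invoke the standard data-processing / marginalization inequality for total variation: if $\pi: E^\ell \to E$ is the projection onto the last coordinate and $\pi_\#$ denotes pushforward, then $\mathsf{TV}(\pi_\#\lambda^\ell_{k-\ell+1}, \pi_\#\tilde\lambda^\ell_{k-\ell+1}) \leq \mathsf{TV}(\lambda^\ell_{k-\ell+1}, \tilde\lambda^\ell_{k-\ell+1})$, which follows because the supremum defining $\mathsf{TV}$ on $E$ ranges over a subclass (preimages under $\pi$) of the events available on $E^\ell$. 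The remaining task is to check that $\pi_\#\lambda^\ell_{k-\ell+1} = \lambda_k$ and $\pi_\#\tilde\lambda^\ell_{k-\ell+1} = \tilde\lambda_{\ell,k}$: the first holds because $v^\ell_{k-\ell+1}$ is by construction the joint density of $(X_{k-\ell+1},\dots,X_k)$, whose last-marginal is $\lambda_k$; the second is precisely the marginalization step (step 4) of Algorithm~\ref{alg:l-approx} together with the re-interpretation given after Proposition~\ref{prop:galerkin}, i.e. $\tilde v_{\ell,k}$ is the marginal of $\tilde v^\ell_{k-\ell+1} = P_\ell^{k-\ell+1} v_0^\ell$. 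Combining the marginalization inequality with the two joint bounds yields $\mathsf{TV}(\lambda_k, \tilde\lambda_{\ell,k}) \leq \min\{\overline{\mathsf{TV}}_{\mathrm{inc}}, \overline{\mathsf{TV}}_{\mathrm{dec}}\}$.

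The main obstacle I anticipate is bookkeeping rather than conceptual depth: carefully matching indices in the telescoped recursion (the theorems are stated at horizon $k$ with increments going down to $\ell$, and the "$e_{\ell,1}^{k-\ell}$" exponent in \eqref{eq:def_theorem_measures} must be summed over the correct range to produce the stated $\frac{1-e_{\ell,1}^{k-\ell+1}}{1-e_{\ell,1}}$ factor), and making sure the constant $\tfrac12$ is handled consistently between the joint-density theorems and the final marginal statement. I would also want to be slightly careful that the marginalization map genuinely reduces to a measurable map between the two $\sigma$-algebras so the data-processing inequality applies verbatim; since $\pi$ is continuous hence Borel-measurable, this is routine. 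Everything else — the geometric series, the pushforward identifications — is mechanical once the recursion is unrolled.
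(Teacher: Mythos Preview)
Your proposal is correct and mirrors the paper's proof almost exactly: telescope Theorem~\ref{thm:increasing}, sum the geometric series, invoke Theorem~\ref{thm:decreasing}, and then pass from joint to marginal measures via the data-processing inequality (the paper phrases this last step as restricting the supremum defining $\mathsf{TV}$ to cylinders $E\times\cdots\times E\times A_k$). On the $\tfrac12$ you flagged: the paper's own derivation retains the $\tfrac12$ throughout, so the corollary as stated (without it) is simply a slightly looser but still valid upper bound --- no hidden absorption occurs, and you need not hunt for one.
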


\begin{remark} \label{rem:difference_sarich}
We should stress that, although we rely on similar tools, our bounds differ from those of \cite{Sarich2010}, even in the memoryless case ($\ell = 1$). Indeed the authors of \cite{Sarich2010} consider the operator norm
\begin{equation*}
    \|Q_1T_1^kQ_1 - (Q_1T_1Q_1)^k \|_2
\end{equation*}
as the error, which makes the assumption that the ground truth probability is $v_k = Q_1T_1^kQ_1v_0$. The setting of \cite{Sarich2010} therefore does not correspond to the setting of this paper, as we consider that the ground truth is $v_k = T_1^k v_0$. Taking into account this continuous ground truth makes our error larger, and consists in a supplementary technical challenge than a simple extension of \cite{Sarich2010}.$\hfill \blacksquare$
\end{remark}


\section{Numerical experiments} \label{sec:numerical}

In this section we motivate the method described in Section~\ref{sec:method} for the two cases described in Section~\ref{sec:problem}. In both cases, we consider the following dynamical system. 
\begin{example} \label{ex:2d}
    Consider the 2-dimensional linear system of the form 
    \begin{equation*}
    \begin{cases}
        x_{k+1} = A x_k + w_k, \\
        w_k \sim \mathcal{N}(m_w, \Sigma_w), \\
        x_0 \sim \mathcal{N}(m_0, \Sigma_0), 
    \end{cases}
    \end{equation*}
    with
    \begin{equation}
        A = \begin{pmatrix}
            0.995 & 0.005 \\
            0 & 0.98
        \end{pmatrix}
    \end{equation}
    and $
    m_w = (0, 0)^\top, 
    \Sigma_w = 0.07 I_2, 
    m_0 = (-0.4, -0.4)^\top, 
    \Sigma_0 = 0.3 I_2$, with $I_2$ the 2-dimensional identity matrix. Since $A$ is stable, this system converges in total variation to a unique invariant distribution $\mu = \mathcal{N}(m_{\mu}, \Sigma_\mu)$, where $m_\mu = (0, 0)^\top$ and 
    \begin{equation}
        \Sigma_\mu \approx \begin{pmatrix}
            7.36896  & 0.347856 \\
            0.347856 & 1.76768
        \end{pmatrix}.
    \end{equation}
    The latter was computed by solving the Riccati equation $\Sigma_\mu = A \Sigma_\mu A^\top + \Sigma_w$ with the \texttt{MatrixEquations.jl} package.\footnote{See \url{https://github.com/andreasvarga/MatrixEquations.jl}.}    
    $\hfill \blacksquare$
\end{example}

In the experiments below, the matrices $\mathbf{P}_\ell$ have been computed with one very long trajectory $\{y_i\}_{i = 1, \dots, 10^5}$ (see Remark~\ref{rem:computing steady state}), and the initial vectors $\mathbf{v}_{0}^\ell$ have been computed with $10^5 / \ell$ samples of length $\ell$. More details about how $\mathsf{TV}(\lambda_k, \tilde{\lambda}_{\ell, k})$ has been computed in practice can be found in Appendix~\ref{app:more_details}.

\textbf{Case 1 - Partially observable systems.} In this case, the system is only partially observable, and we only have access to the outputs. The state space is discretized as follows: each dimension is partitioned into $(-\infty,-1)$, $(1,\infty)$, and the interval $[-1,1]$ is further partitioned into $p$ subintervals of equal size. Thus the partition contains $n=(p+2)^2$ cells. In this case, we fix $p = 3$, leading to a 25 cells partition, and we approximate the discrete process $(Y_k)_{k \geq 0}$ with the process $(\tilde{Y}_{\ell, k})$ as defined in Section~\ref{sec:abstraction}, for $\ell = 1, 2, 3$. More precisely, we compute $\tilde{v}_{\ell, k}$ with Algorithm~\ref{alg:l-approx}, and compute $\mathsf{TV}(\lambda_k, \tilde{\lambda}_{\ell, k})$ for $k \in \{0, \dots, 100\}$. 

The results are in Figure~\ref{fig:motivation_1}. One can see that increasing memory reduces $\mathsf{TV}(\lambda_k, \tilde{\lambda}_{\ell, k})$ for most horizons $k$, \textbf{thereby increasing the approximation quality}. Moreover, one can see that the observed bounds follow the theoretical setting of Theorem~\ref{thm:increasing} and Theorem~\ref{thm:decreasing}, as the bounds seem to follow two regimes, first increasing and then decreasing. 

\begin{figure}[ht!]
    \centering
    \includegraphics[width=0.6\linewidth]{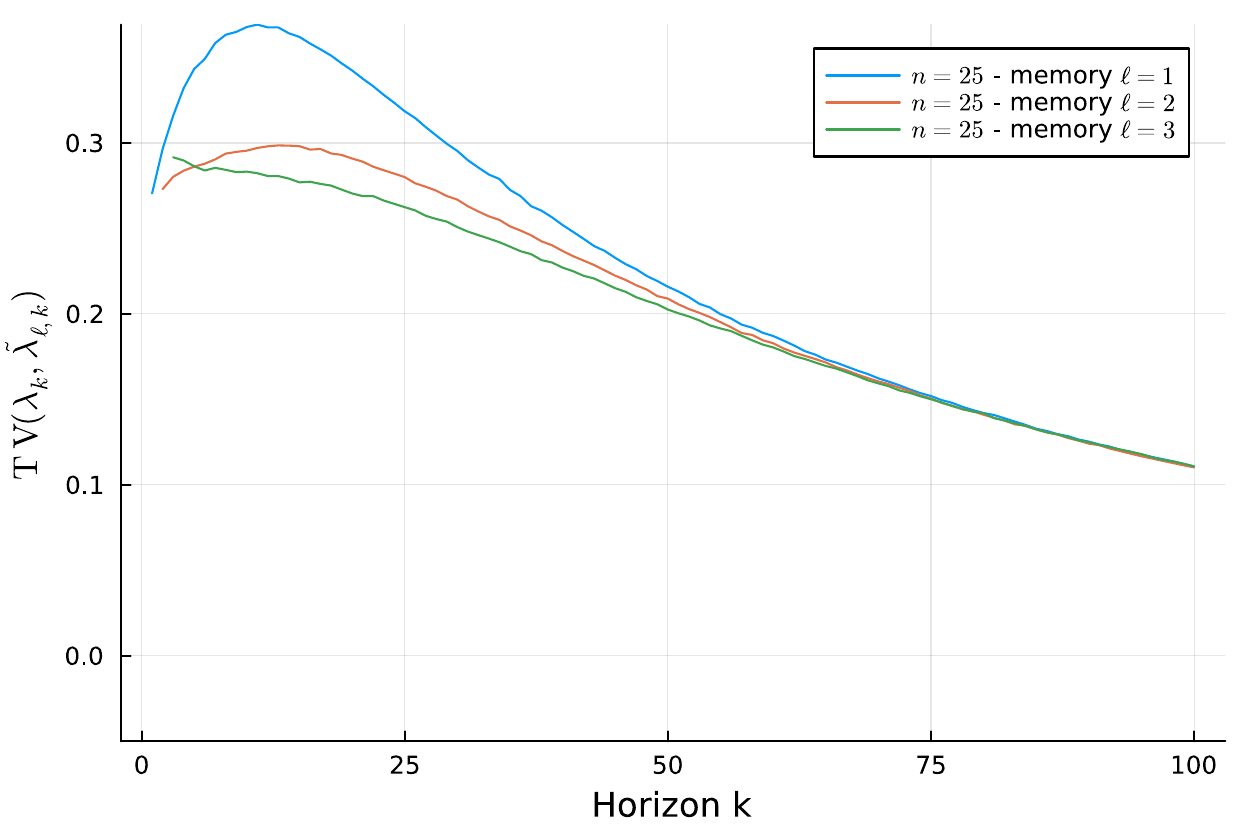}
    \caption{Approximation quality of $\ell$-memory Markov models for partially observable systems. Introducing memory improves approximation quality.}
    \label{fig:motivation_1}
\end{figure}

\textbf{Case 2 - Finite abstractions.} In this case, the state process of the system is available, but the state space is discretized in order to abstract the system. We consider two comparable settings: 
\begin{enumerate}
    \item (Classical Markov chain) The state space is discretized in $n = 729$ blocks in the same fashion as above (with $p = 25$). Memory $\ell=1$ is considered, leading to $729^2 = 531441$ transition probabilities $(\mathbf{P}_1)_{i_1, i_2}$ with $i_1, i_2 \in \{1, \dots, 729\}$.
    \item (2-memory Markov model) The state space is discretized in the same way (uniformly in the square $[-1, 1]^2$) in $n = 81$ blocks ($p=7$). Memory $\ell=2$ is considered, also leading to $81^3 = 531441$ transition probabilities $(\mathbf{P})_{i_1 i_2, i_2 i_3}$ with $i_1, i_2, i_3 \in \{1, \dots, 81\}$.
\end{enumerate}
The two settings lead to discrete objects of the same size, since one only needs to store $531441$ values to save them. For these two settings, we compute $\tilde{v}_{\ell, k}$ with Algorithm~\ref{alg:l-approx}, and compute $\mathsf{TV}(\lambda_k, \tilde{\lambda}_{\ell, k})$ for $k \in \{0, \dots, 100\}$. The results are in Figure~\ref{fig:motivation_2}. We observe that, even though the initial partition is coarser, larger memory leads to a better approximation, showcasing the fact that memory allows to construct \textbf{smarter abstractions than classical approaches}. 

\begin{figure}[ht!]
    \centering
    \includegraphics[width=0.6\linewidth]{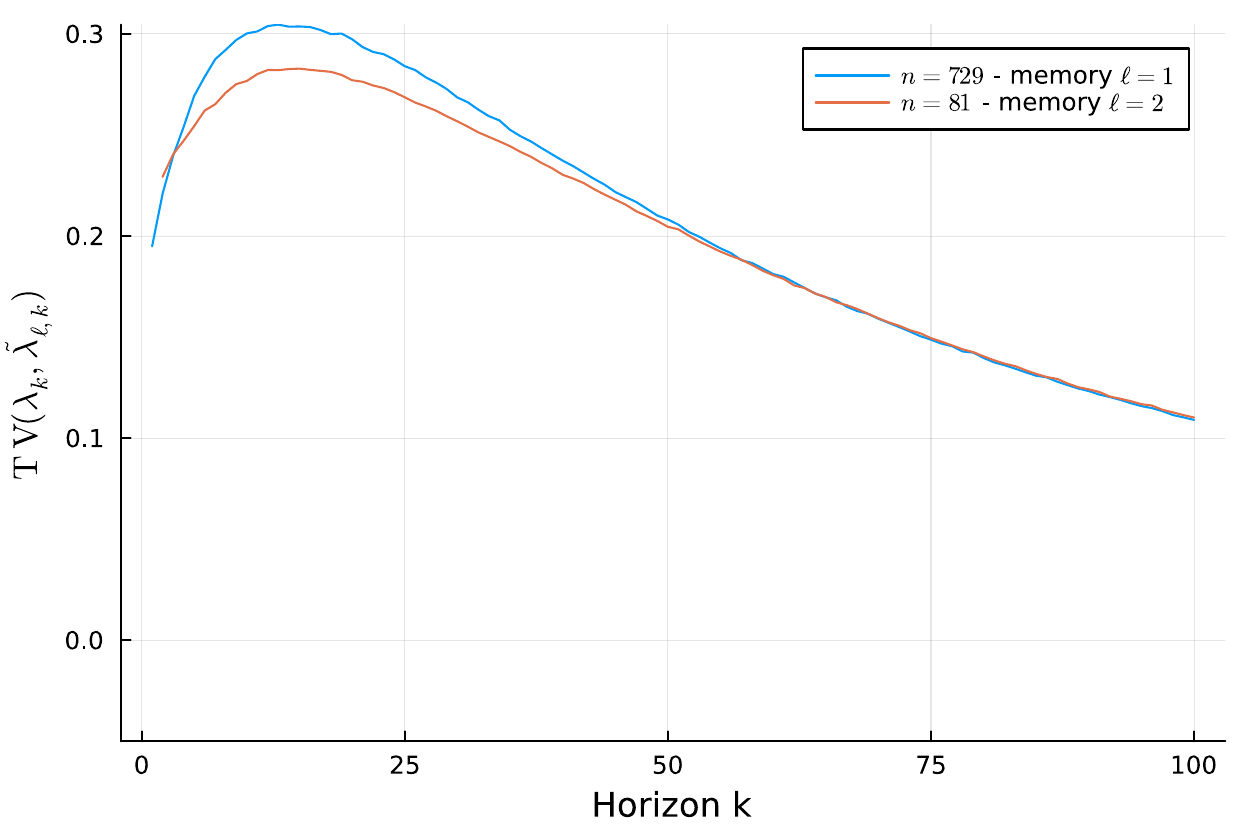}
    \caption{Approximation quality of memory-dependent abstractions of an observable system. One can see that, for the same number of transition probabilities ($n^{\ell+1} = 531441$ in both cases), starting from a coarser partition and increasing the memory leads to better approximations.}
    \label{fig:motivation_2}
\end{figure}

\section{Conclusions and further work} \label{sec:conclu}

In summary, in this work, we have introduced memory-dependent abstractions for stochastic systems. Our formalism, based on Galerkin approximations of lifted transfer operators, provides a theoretical framework for studying these abstractions. We have also upper bounded the approximation error, that we define as the total variation distance between the true distribution on the state space and the one of the memory-dependent approximation. We showed that this error consists of two regimes, one increasing (because of the accumulation of projection errors), and one decreasing (thanks to ergodicity). Through numerical experiments, we have demonstrated that increasing memory reduces the approximation error in various scenarios, highlighting how memory-dependent abstractions effectively address the issue of non-Markovianity of the discrete process induced by the discretization.

There are many interesting directions for future work. First, our numerical experiments suggest that for partially observable systems with a fixed partition, increasing memory allows to improve the approximation quality. Identifying the class of systems for which increasing memory guarantees a better approximation is an interesting direction for further research. Second, we plan to extend this work to the data-driven setting, and exploring aspects such as sample complexity as a function of the number of blocks and memory. Third, while our current bounds are valid, they suffer from conservatism and often exceed 1, the maximal value of any total variation distance. Therefore, we aim to investigate alternative approaches that directly rely on intermediate results on the 1-norm, rather than relying on 2-norm results as we do here.
    
    \bibliographystyle{apalike}
    \bibliography{ref}

    \appendix
\section{Proofs} \label{sec:proofs}

Our results rely on the Hölder's inequality, that we recall hereinafter.

\begin{lemma}[Hölder's inequality]
    Given a measurable space $(E, \mathcal{F})$, together with a measure $\mu$, and $p, q \in [0, + \infty]$ such that $1/p + 1/q = 1$.  Then, for all $f : E \to \mathbb{R}$ and $g : E \to \mathbb{R}$, it holds that
    \begin{equation*}
        \|fg\|_1 \leq \|f\|_p \|g\|_q.
    \end{equation*}
\end{lemma}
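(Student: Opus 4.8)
The plan is to prove Hölder's inequality by reducing it, through a normalization argument, to Young's inequality for products of nonnegative reals, whose proof is in turn an elementary consequence of the convexity of the exponential.

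First I would dispose of the degenerate cases. If $\|f\|_p = 0$ then $f = 0$ $\mu$-almost everywhere (and similarly for $g$), hence $fg = 0$ a.e.\ and $\|fg\|_1 = 0$, so the inequality holds trivially; if $\|f\|_p = +\infty$ or $\|g\|_q = +\infty$, the right-hand side is $+\infty$ and there is nothing to prove. The boundary pair $p = 1$, $q = +\infty$ (and, symmetrically, $p = +\infty$, $q = 1$) is handled directly: one has $|f(x) g(x)| \leq |f(x)|\, \|g\|_\infty$ for $\mu$-a.e.\ $x$, where $\|g\|_\infty$ is the essential supremum of $|g|$, and integrating against $\mu$ gives $\|fg\|_1 \leq \|f\|_1 \|g\|_\infty$. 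It therefore remains to treat $1 < p, q < +\infty$ with $0 < \|f\|_p, \|g\|_q < +\infty$.

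Second, I would establish Young's inequality: for all $a, b \geq 0$ and conjugate exponents $p, q$ (i.e.\ $1/p + 1/q = 1$), it holds that $ab \leq \tfrac{a^p}{p} + \tfrac{b^q}{q}$. For $a, b > 0$ one writes $ab = \exp\!\big(\tfrac{1}{p}\log a^p + \tfrac{1}{q}\log b^q\big)$ and applies the two-point convexity inequality $\exp(\lambda s + (1-\lambda) t) \leq \lambda e^s + (1-\lambda) e^t$ with $\lambda = 1/p$, $s = \log a^p$, $t = \log b^q$, which yields the claim; the cases $a = 0$ or $b = 0$ are immediate. Then I would normalize: set $F := |f|/\|f\|_p$ and $G := |g|/\|g\|_q$, so that $F, G$ are measurable, nonnegative, and satisfy $\int_E F^p \, \text{d}\mu = \int_E G^q \, \text{d}\mu = 1$. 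Applying Young's inequality pointwise gives $F(x) G(x) \leq \tfrac{F(x)^p}{p} + \tfrac{G(x)^q}{q}$ for every $x \in E$, and integrating against $\mu$ yields $\int_E F G \, \text{d}\mu \leq \tfrac{1}{p} + \tfrac{1}{q} = 1$. Multiplying both sides by $\|f\|_p \|g\|_q$ and noting that $\int_E |fg| \, \text{d}\mu = \|fg\|_1$ produces $\|fg\|_1 \leq \|f\|_p \|g\|_q$, as desired.

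The main obstacle here is essentially bookkeeping rather than mathematical depth: one must verify measurability of $|f|^p$, $|g|^q$ and $|fg|$ (immediate, since $f, g$ are measurable and $t \mapsto |t|^r$ is continuous), and dispatch the degenerate norms and boundary exponents cleanly so that every pointwise inequality is legitimate before integrating. The analytic core — Young's inequality — reduces to a single use of convexity of $\exp$, so no serious difficulty is expected.
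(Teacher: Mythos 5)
The paper recalls H\"older's inequality as a classical fact and gives no proof of it, so there is nothing to compare against; your argument is the standard textbook proof (dispatch the degenerate and boundary cases, establish Young's inequality $ab \leq a^p/p + b^q/q$ from convexity of $\exp$, then normalize and integrate) and it is correct and complete. One minor point worth noting: the lemma as stated writes $p, q \in [0, +\infty]$, which is surely a typo for $[1, +\infty]$ (conjugate exponents with $1/p + 1/q = 1$ force $p, q \geq 1$), and your proof correctly treats only that range; the paper itself only ever invokes the case $p = q = 2$ together with $\mu(E) = 1$ to deduce $\|f\|_1 \leq \|f\|_2$.
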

A direct consequence of Hölder's inequality is that, for all function $f \in L^2(\mu)$, it holds that
\begin{equation*}
    \|f\|_1 \leq \|f\|_2.
\end{equation*}

Our proofs also rely on the following lemma, that holds under Assumption~\ref{ass:spectral}.
\begin{lemma}[{\cite[Lemma~2.2]{Sarich2010}}] \label{lemma:sarich}
    For all initial densities $v_0^\ell \in L^2(\mu^\ell)$, and $k \geq \ell - 1$ it holds that 
    \begin{equation*}
        \|P_\ell^{k-\ell+1} v_0^\ell - \mathds{1}\|_2 \leq \|(T_\ell Q_\ell)^{k-\ell+1}v_0^\ell - \mathds{1}\|_2 \leq e_{\ell, 1}^{k-\ell+1}\|v_0^\ell\|_2.
    \end{equation*}
\end{lemma}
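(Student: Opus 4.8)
I would set $j:=k-\ell+1\ge 0$ (so $j\ge 0$ since $k\ge\ell-1$) and prove the two inequalities separately. The first is purely algebraic. Because $Q_\ell^2=Q_\ell$, one has $P_\ell^{j}=(Q_\ell T_\ell Q_\ell)^{j}=Q_\ell(T_\ell Q_\ell)^{j}$. The constant function $\mathds{1}$ lies in $D_n^\ell$: weighting the normalised indicators $\psi_{i_1\dots i_\ell}$ of \eqref{eq:fix_denom} by the probabilities $\mathbb{P}[X_1^\mu\in A_{i_1},\dots,X_\ell^\mu\in A_{i_\ell}]$ and summing over all index tuples returns $\mathds{1}$, since the $A_i$ partition $E$. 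Hence $Q_\ell\mathds{1}=\mathds{1}$, so $P_\ell^{j}v_0^\ell-\mathds{1}=Q_\ell\!\left((T_\ell Q_\ell)^{j}v_0^\ell-\mathds{1}\right)$, and since $Q_\ell$ is an orthogonal projection, $\|Q_\ell\|_2=1$ (Definition~\ref{def:projection}), which gives the first inequality.

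For the second inequality I would first peel off the constant direction. Let $\Pi_0 f:=\langle f,\mathds{1}\rangle\mathds{1}$ be the orthogonal projection onto the constants and decompose $Q_\ell=\Pi_0+Q_\ell'$, where $Q_\ell'$ is the orthogonal projection onto $D_n^\ell\cap\mathds{1}^\perp$. Since $\mu^\ell$ is invariant, $T_\ell$ preserves total mass, i.e. $\langle T_\ell g,\mathds{1}\rangle=\langle g,\mathds{1}\rangle$ for all $g\in L^2(\mu^\ell)$, and $T_\ell\mathds{1}=\mathds{1}$; together these give $T_\ell\Pi_0=\Pi_0 T_\ell=\Pi_0$ and the $T_\ell$-invariance of $\mathds{1}^\perp$. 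Consequently $\Pi_0(T_\ell Q_\ell')=0$ and $(T_\ell Q_\ell')\Pi_0=0$ (using $Q_\ell'\mathds{1}=0$ and $Q_\ell'f\perp\mathds{1}$), so expanding $(\Pi_0+T_\ell Q_\ell')^{j}$ and discarding every product containing both factors gives $(T_\ell Q_\ell)^{j}=\Pi_0+(T_\ell Q_\ell')^{j}$. Thus $(T_\ell Q_\ell)^{j}v_0^\ell-\mathds{1}=(\langle v_0^\ell,\mathds{1}\rangle-1)\mathds{1}+(T_\ell Q_\ell')^{j}v_0^\ell$, which equals $(T_\ell Q_\ell')^{j}v_0^\ell$ for a density $v_0^\ell$, so it remains to bound $\|(T_\ell Q_\ell')^{j}v_0^\ell\|_2$.

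The crux is the contraction estimate $\|T_\ell g\|_2\le e_{\ell,1}\|g\|_2$ for every $g\in\mathds{1}^\perp$. Using Assumption~\ref{ass:spectral}, let $\Pi_\ell$ be the orthogonal projection onto $V_m:=\text{span}(u_1^\ell,\dots,u_m^\ell)\subseteq\mathds{1}^\perp$; the identity $\Pi_\ell T_\ell=\Pi_\ell T_\ell\Pi_\ell$ forces $T_\ell(\ker\Pi_\ell)\subseteq\ker\Pi_\ell$, so for $g\in\mathds{1}^\perp$ the splitting $g=\Pi_\ell g+(g-\Pi_\ell g)$ has $g-\Pi_\ell g\in\ker\Pi_\ell\cap\mathds{1}^\perp$, and $T_\ell g=T_\ell\Pi_\ell g+T_\ell(g-\Pi_\ell g)$ is an orthogonal decomposition since $V_m\perp\ker\Pi_\ell$ and both subspaces are $T_\ell$-invariant. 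The first term satisfies $\|T_\ell\Pi_\ell g\|_2^2=\sum_{i=1}^m e_{\ell,i}^2\langle g,u_i^\ell\rangle^2\le e_{\ell,1}^2\|\Pi_\ell g\|_2^2$ (using $e_{\ell,1}\ge|e_{\ell,i}|$ for $1\le i\le m$, from the ordering in Assumption~\ref{ass:spectral}), and the second $\|T_\ell(g-\Pi_\ell g)\|_2\le r_\ell\|g-\Pi_\ell g\|_2\le e_{\ell,1}\|g-\Pi_\ell g\|_2$, because the spectrum of $T_\ell$ on the invariant subspace $\ker\Pi_\ell\cap\mathds{1}^\perp$ has modulus below $r_\ell$; Pythagoras closes the estimate. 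Iterating $T_\ell Q_\ell'$ then finishes: $Q_\ell'$ has norm $\le 1$ and maps into $D_n^\ell\cap\mathds{1}^\perp\subseteq\mathds{1}^\perp$, while each of the $j$ applications of $T_\ell$ acts on a vector of $\mathds{1}^\perp$, so $\|(T_\ell Q_\ell')^{j}v_0^\ell\|_2\le e_{\ell,1}^{j}\|v_0^\ell\|_2$. The main obstacle is precisely the sub-dominant bound $\|T_\ell|_{\ker\Pi_\ell\cap\mathds{1}^\perp}\|_2\le r_\ell$: Assumption~\ref{ass:spectral} caps only the \emph{spectral radius} there, which controls the asymptotic growth rate and not a single application of $T_\ell$. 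Bridging this gap is the technical content of \cite[Lemma~2.2]{Sarich2010} on which we build; it holds, e.g., under the reversibility/self-adjointness conditions mentioned in the remark after Assumption~\ref{ass:spectral}, under which operator norm and spectral radius coincide on that invariant subspace.
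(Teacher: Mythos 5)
The paper does not prove this lemma: it is imported verbatim as \cite[Lemma~2.2]{Sarich2010}, so there is no in-paper argument to compare yours against. Taken on its own merits, your reconstruction is structurally sound. The first inequality is handled correctly: $P_\ell^{j}=Q_\ell(T_\ell Q_\ell)^{j}$ by idempotence of $Q_\ell$, $\mathds{1}\in D_n^\ell$ gives $Q_\ell\mathds{1}=\mathds{1}$, and $\|Q_\ell\|_2=1$ by Definition~\ref{def:projection}. The reduction of the second inequality to a contraction of $T_\ell Q_\ell$ on $\mathds{1}^\perp$, via mass preservation of $T_\ell$ and $\langle v_0^\ell,\mathds{1}\rangle=1$, is also correct.

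Two caveats remain. First, your estimate $\|T_\ell\Pi_\ell g\|_2\le e_{\ell,1}\|\Pi_\ell g\|_2$ invokes ``$e_{\ell,1}\ge|e_{\ell,i}|$ from the ordering in Assumption~\ref{ass:spectral}'', but the ordering $e_{\ell,1}\ge\dots\ge e_{\ell,m}$ bounds the eigenvalues only from above, not in modulus; a strongly negative $e_{\ell,m}$ would break this step (and, in fact, the lemma's conclusion itself). One must read the assumption as additionally requiring that $e_{\ell,1}$ dominates all sub-unit eigenvalues in modulus --- which the reversible setting of \cite{Sarich2010} delivers --- rather than attributing it to the stated ordering. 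Second, you correctly identify the genuine technical gap: bounding a \emph{single} application of $T_\ell$ on $\ker\Pi_\ell\cap\mathds{1}^\perp$ by $r_\ell$ requires control of the operator norm there, not merely of the spectral radius, and this is supplied by self-adjointness (reversibility) in \cite{Sarich2010}. Since the paper itself defers exactly this content to the cited reference, your proposal is acceptable as a sketch, provided both points are made explicit.
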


Finally, we will also need the fact that all transfer operator has a unitary norm.
\begin{lemma} \label{lemma:unitary_transfer}
    Given a state space $E$ and a kernel $\tau$, the transfer operator $T : L^2(\mu) \to L^2(\mu)$ is such that $\|T\|_1 = 1$.
\end{lemma}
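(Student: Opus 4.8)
The plan is to prove the two inequalities $\|T\|_1 \geq 1$ and $\|T\|_1 \leq 1$ separately, and then combine them. For the lower bound I would simply test $T$ against the invariant density $\mathds{1}$: it lies in $L^2(\mu)$, it satisfies $\|\mathds{1}\|_1 = \int_{x\in E}\mu(\mathrm dx) = 1$, and, as recalled in the preliminaries (following \cite{Sarich2010}), $T\mathds{1} = \mathds{1}$. Hence $\|T\mathds{1}\|_1 = 1$, and the definition of the operator norm immediately gives $\|T\|_1 \geq 1$.

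For the upper bound I would show that $\|Tv\|_1 \leq \|v\|_1$ for every $v \in L^2(\mu)$, working directly from the defining identity of Definition~\ref{def:transfer} so as not to require the existence of the Radon--Nikodym derivative $t(\cdot\mid x)$. Fix $v \in L^2(\mu) \subseteq L^1(\mu)$, so that $Tv \in L^2(\mu) \subseteq L^1(\mu)$ as well, the inclusions holding because $\mu$ is a probability measure (equivalently, by the standing assumption on $\mu$). Set $A^{+} := \{y \in E : (Tv)(y) \geq 0\}$ and $A^{-} := E \setminus A^{+}$, both Borel. Then, splitting the $L^1$ norm according to the sign of $Tv$ and applying Definition~\ref{def:transfer} with $A = A^{+}$ and with $A = A^{-}$,
\[
\|Tv\|_1 = \int_{y \in A^{+}} (Tv)(y)\,\mu(\mathrm dy) - \int_{y \in A^{-}} (Tv)(y)\,\mu(\mathrm dy) = \int_{x \in E} \bigl(\tau(A^{+}\mid x) - \tau(A^{-}\mid x)\bigr) v(x)\,\mu(\mathrm dx).
\]
Since $A^{+}$ and $A^{-}$ partition $E$ and $\tau(\cdot\mid x)$ is a probability measure, $|\tau(A^{+}\mid x) - \tau(A^{-}\mid x)| \leq \tau(A^{+}\mid x) + \tau(A^{-}\mid x) = \tau(E\mid x) = 1$ for every $x \in E$; thus the integrand is bounded in modulus by $|v(x)|$, whence $\|Tv\|_1 \leq \int_{x\in E}|v(x)|\,\mu(\mathrm dx) = \|v\|_1$. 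Taking the supremum over $v \in L^2(\mu)$ with $\|v\|_1 \leq 1$ yields $\|T\|_1 \leq 1$, and together with the lower bound this gives $\|T\|_1 = 1$.

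If one is content to invoke the explicit form \eqref{eq:transfer}, the contraction estimate is even quicker: $|(Tv)(y)| \leq \int_{x\in E} t(y\mid x)\,|v(x)|\,\mu(\mathrm dx)$, and integrating in $y$ and exchanging the order of integration (Tonelli, the integrand being nonnegative) with $\int_{y\in E} t(y\mid x)\,\mu(\mathrm dy) = \tau(E\mid x) = 1$ again gives $\|Tv\|_1 \leq \|v\|_1$. I do not expect any genuine obstacle here; the only thing to be careful about is the measurability/integrability bookkeeping that legitimizes applying Definition~\ref{def:transfer} to the sets $A^{\pm}$ (respectively the Tonelli exchange), which is harmless since $\mu$ is a probability measure and $Tv \in L^1(\mu)$. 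Conceptually this lemma is just the standard fact that a Markovian transfer (Perron--Frobenius) operator is an $L^1$ contraction that fixes the invariant density, so the contraction gives $\|T\|_1 \leq 1$ and the fixed point gives $\|T\|_1 \geq 1$.
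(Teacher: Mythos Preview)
Your proof is correct, and the overall strategy (establish $\|T\|_1 \geq 1$ and $\|T\|_1 \leq 1$ separately) matches the paper's. The implementations of both halves differ, however, in instructive ways.

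For the lower bound, the paper picks an \emph{arbitrary} nonnegative $f$ with $\|f\|_1 = 1$, observes that $Tf \geq 0$, and uses mass preservation $\int_E (Tf)\,\mathrm d\mu = \int_E f\,\mathrm d\mu$ (which follows from Definition~\ref{def:transfer} with $A=E$) to get $\|Tf\|_1 = 1$. Your choice of the specific fixed point $\mathds{1}$ is quicker, since $T\mathds{1}=\mathds{1}$ is already available from the preliminaries; the paper's argument is slightly more general in that it does not rely on the invariant density but only on the fact that $\tau(\cdot\mid x)$ is a probability measure.

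For the upper bound, the paper splits the \emph{input}: it writes $f = f^{+} - f^{-}$, uses linearity and nonnegativity preservation of $T$, bounds $|Tf^{+} - Tf^{-}| \leq Tf^{+} + Tf^{-}$, and again invokes mass preservation on each piece. You instead split the \emph{output}: you partition $E$ according to the sign of $Tv$ and apply Definition~\ref{def:transfer} directly to the two pieces, reducing everything to the elementary estimate $|\tau(A^{+}\mid x) - \tau(A^{-}\mid x)| \leq 1$. Your route avoids having to argue separately that $T$ maps nonnegative functions to nonnegative functions, and it stays at the level of the implicit Definition~\ref{def:transfer} without ever invoking the density $t(\cdot\mid x)$; the paper's route is the more classical Perron--Frobenius decomposition. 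Both yield the contraction $\|Tv\|_1 \leq \|v\|_1$ with equal ease.
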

\begin{proof}
    We first recall the operator norm
    \begin{equation*}
        \|T\|_1 = \sup_{f \in L^2(\mu) : \|f\|_1 \leq 1} \|Tf\|_1.
    \end{equation*}
    First we prove that $\|T\|_1 \geq 1$. Take any nonnegative function $f : E \to \mathbb{R}_{\geq 0}$ such that $\|f\|_1 = 1$. Then it holds that 
    \begin{equation*}
        \|Tf\|_1 = \int_{x \in E} |(Tf)(x)| \mu(\text dx).
    \end{equation*}
    By definition of the transfer operator, if $f$ is nonnegative, then $Tf$ is also nonnegative. Therefore, 
    \begin{equation*}
    \begin{aligned}
         \|Tf\|_1 &= \int_{x \in E} (Tf)(x) \mu(\text dx)  \\
         &= \int_{x \in E} f(x) \mu(\text dx) \\
         &= 1, 
    \end{aligned}
    \end{equation*}
    where the first equality holds by \eqref{eq:transfer}, and the second by assumption. By definition of $\sup$, this proves the first claim.

    Now we prove that $\|T\|_1 \leq 1$. Any function $f : E \to \mathbb{R}$ can be written as $f = f^+ - f^-$, where $f^+ : E \to \mathbb{R}_{\geq 0}$ and $f^- : E \to \mathbb{R}_{\geq 0}$ are respectively the positive and negative parts of $f$. Therefore, for all functions $f$ such that $\|f\|_1 \leq 1$, it holds that 
    \begin{equation*}
    \begin{aligned}
        \|Tf\|_1 &= \int_{x \in E} |(Tf)(x)| \mu(\text dx) \\
        &= \int_{x \in E} |(T(f^+ - f^-))(x)| \mu(\text dx) \\
        &= \int_{x \in E} |(Tf^+)(x) - (Tf^-)(x)| \mu(\text dx) \\
        &= \int_{x \in E} (Tf^+)(x) + (Tf^-)(x) \mu(\text dx) \\
        &= \int_{x \in E} (Tf^+)(x)\mu(\text dx)  + \int_{x \in E} (Tf^-)(x) \mu(\text dx) \\
        &= \int_{x \in E} f^+(x)\mu(\text dx)  + \int_{x \in E} f^-(x) \mu(\text dx) \\
        &= \int_{x \in E} f^+(x) + f^-(x) \mu(\text dx) \\
        &= \int_{x \in E} |f^+(x) - f^-(x)| \mu(\text dx) \\
        &=\int_{x \in E} |f(x)| \mu(\text dx) \\
        &\leq 1, 
    \end{aligned}
    \end{equation*}
    which concludes the proof.
\end{proof}

\subsection{Proof of Proposition~\ref{prop:galerkin}}

First we define an orthonormal basis for $D_n^\ell$, given by $\{\phi_{i_1 \dots i_\ell}\}_{i_1, \dots, i_\ell \in [n]}$: 
\begin{equation*}
    \phi_{i_1 \dots i_\ell}(x_1, \dots, x_{\ell}) = \frac{\chi_{A_{i_1}}(x_1) \dots \chi_{A_{i_\ell}}(x_\ell)}{\sqrt{\mathbb{P}[X^\mu_1 \in A_{i_1}, \dots, X^\mu_{\ell} \in A_{i_\ell}]}}. 
\end{equation*}
The orthogonality of the basis follows, since for all $i_1, \dots, i_\ell \in [n]$,
\begin{equation*}
\begin{aligned}
    \langle \phi_{i_1 \dots i_\ell}, \phi_{i_1 \dots i_\ell} \rangle &= \frac{\int_{x_1 \in A_{i_1}} \dots\int_{x_\ell \in A_{i_\ell}} \tau(\text dx_{\ell} | x_{\ell-1}) \dots \tau(\text dx_1 | x_0) \mu(\text dx_0)}{\mathbb{P}[X^\mu_1 \in A_{i_1}, \dots, X^\mu_{\ell} \in A_{i_\ell}]}= 1, 
\end{aligned}
\end{equation*}
and for all $i_1 \dots i_\ell \neq j_1 \dots j_\ell$, $\langle \phi_{i_1 \dots i_\ell}, \phi_{j_1 \dots j_\ell} \rangle = 0$. As a consequence, by definition of $Q_\ell$, it holds that
\begin{equation} \label{eq:proj_orthonormal}
    Q_\ell v^\ell = \sum_{i_1 \dots i_\ell} \langle v^\ell, \phi_{i_1 \dots i_\ell} \rangle \phi_{i_1 \dots i_\ell}
\end{equation}
for all $v^\ell \in L^2(\mu^\ell)$.

Now, by definition of $P_\ell$ and by \eqref{eq:proj_orthonormal}, for all $i_0, \dots, i_{\ell-1} \in [n]$,
\begin{equation} \label{eq:proof_prop_1}
\begin{aligned} 
    P_\ell \psi_{i_0, \dots, i_{\ell-1}}
    &= Q_\ell T_\ell Q_\ell \psi_{i_0, \dots, i_{\ell-1}} \\
    &= Q_\ell \left(T_\ell \psi_{i_0, \dots, i_{\ell-1}}\right) \\
    &= \sum_{j_1, \dots, j_\ell \in [n]} \langle T_\ell \psi_{i_0, \dots, i_{\ell-1}}, \phi_{j_1, \dots, j_{\ell}} \rangle \phi_{j_1, \dots, j_{\ell}} \\
    &= \sum_{j_1, \dots, j_\ell \in [n]} \frac{\left\langle T_\ell \left(\chi_{A_{i_0}} \dots \chi_{A_{i_{\ell-1}}}\right), \chi_{A_{j_1}} \dots \chi_{A_{j_{\ell}}} \right\rangle}{\mathbb{P}[X^\mu_0 \in A_{i_0}, \dots, X^\mu_{\ell-1} \in A_{i_{\ell-1}}]} \psi_{j_1, \dots, j_{\ell}}.
\end{aligned}
\end{equation}
By~\eqref{eq:kernel_lift} , the dot product above equals zero if $j_1 \dots j_{\ell-1} \neq i_1 \dots i_{\ell-1}$, and, otherwise, is equal to 
\begin{equation*}
\begin{aligned}
    &\hspace{-0.2cm}\int_{x_0 \in A_{i_0}} \dots \int_{x_{\ell-1} \in A_{i_{\ell-1}}} \tau(A_{j_\ell} | x_{\ell-1})\tau(dx_{\ell-1} | x_{\ell-2}) \dots \tau(\text dx_1 | x_0) \mu(\text dx_0)\\
    &\quad = \mathbb{P}[X_0^\mu \in A_{i_0}, \dots, X_{\ell}^\mu \in A_{j_\ell}].
\end{aligned}
\end{equation*}
Therefore, inserting this in \eqref{eq:proof_prop_1} yields

\begin{equation*}
    P_\ell \psi_{j_0, \dots, j_{\ell-1}} = \sum_{j_1, \dots, j_\ell \in [n]} 
    \frac{\mathbb{P}[X_0^\mu \in A_{i_0}, \dots, X_{\ell}^\mu \in A_{j_\ell}]}{\mathbb{P}[X^\mu_0 \in A_{i_0}, \dots, X^\mu_{\ell-1} \in A_{i_{\ell-1}}]} \psi_{j_1, \dots, j_{\ell}}, 
\end{equation*}
which is \eqref{eq:prop1_to_prove} by definition of the output process, and the proof is completed. $\hfill \square$

\subsection{Proof of Theorem~\ref{thm:increasing}}

First, by the triangular inequality, it holds that 
\begin{equation*}
    \mathsf{TV}(\lambda^\ell_{k-\ell+1}, \tilde{\lambda}^\ell_{k-\ell+1}) \leq \mathsf{TV}(\lambda^\ell_{k-\ell+1}, \overline{\lambda}^\ell_{k-\ell+1}) + \mathsf{TV}(\overline{\lambda}^\ell_{k-\ell+1}, \tilde{\lambda}^\ell_{k-\ell+1}), 
\end{equation*}
where $\lambda^\ell_{k-\ell+1}$, $\overline{\lambda}^\ell_{k-\ell+1}$ and $\tilde{\lambda}^\ell_{k-\ell+1}$ are respectively the measures corresponding to 
\begin{equation*}
    v^\ell_{k-\ell+1} = T_\ell^{k-\ell+1}v^\ell_0, 
\end{equation*}
\begin{equation*}
    \overline{v}^\ell_{k-\ell+1} = T_\ell P_\ell^{k-\ell} v^\ell_0, 
\end{equation*}
and
\begin{equation*}
    \tilde{v}^\ell_{k-\ell+1} = P_\ell^{k-\ell+1}v^\ell_0.  
\end{equation*}
First, we show that 
\begin{equation*}
    \mathsf{TV}(\lambda^\ell_{k-\ell+1}, \overline{\lambda}^\ell_{k-\ell+1}) \leq \mathsf{TV}(\lambda^\ell_{k-\ell}, \tilde{\lambda}^\ell_{k-\ell}).
\end{equation*}
It holds that $v^\ell_{k-\ell+1} = T_\ell v^\ell_{k-\ell}$, and that $\overline{v}^\ell_{k-\ell+1} = T_\ell \tilde{v}^\ell_{k-\ell}$. Therefore, by definition of the total variation distance and by Lemma~\ref{lemma:unitary_transfer},
\begin{equation*}
\begin{aligned}
    \mathsf{TV}(\lambda^\ell_{k-\ell+1}, \overline{\lambda}^\ell_{k-\ell+1}) &= \frac{1}{2} \left\|T_\ell \left(v^\ell_{k-\ell} - \tilde{v}^{\ell}_{k-\ell} \right)\right\|_1 \\
    &\leq \frac{1}{2} \|T_\ell\|_1 \|v^\ell_{k-\ell} - \tilde{v}^\ell_{k-\ell}\|_1 \\
    &= \frac{1}{2} \|v^\ell_{k-\ell} - \tilde{v}^\ell_{k-\ell}\|_1 \\
    &= \mathsf{TV}(\lambda^\ell_{k-\ell}, \tilde{\lambda}^\ell_{k-\ell}).
\end{aligned}
\end{equation*} 
Second, we show that 
\begin{equation*}
    \mathsf{TV}(\overline{\lambda}^\ell_{k-\ell+1}, \tilde{\lambda}^\ell_{k-\ell+1}) \leq \frac{1}{2} (me_{\ell, 1}\delta_\ell + r_\ell) e_{\ell, 1}^{k-\ell}\|v_0^\ell\|_2.
\end{equation*}
It holds that $\overline{v}^\ell_{k-\ell+1} = T_\ell \tilde{v}_{k-\ell}^\ell$, and that $\tilde{v}^\ell_{k-\ell+1} = P_\ell \tilde{v}^\ell_{k-\ell} = Q_\ell T_\ell  \tilde{v}^\ell_{k-\ell}$. Therefore, by definition of the total variation distance, and by Hölder's inequality, it holds that 
\begin{equation*}
\begin{aligned}
    \mathsf{TV}(\overline{\lambda}^\ell_{k-\ell+1}, \tilde{\lambda}^\ell_{k-\ell+1}) &= \frac{1}{2}\| (T_\ell - Q_\ell T_\ell)\tilde{v}^\ell_{k-\ell} \|_1 \leq \frac{1}{2} \| (T_\ell - Q_\ell T_\ell)\tilde{v}^\ell_{k-\ell} \|_2.
\end{aligned}
\end{equation*}
Now we follow a similar reasoning as in the proof of \cite[Theorem~3.1]{Sarich2010}. Let $Q^\perp_{\ell} = \text{Id} - Q_\ell$, where $\text{Id}$ the identity operator. Since $Q^\perp_\ell \mathds{1} = 0$, it holds that
\begin{equation*}
\begin{aligned}
    \| (T_\ell - Q_\ell T_\ell) \tilde{v}^\ell_{k-\ell} \|_2 &= \| Q^\perp_\ell T_\ell \tilde{v}^\ell_{k-\ell} \|_2 \\
    &= \|Q_\ell^\perp T_\ell Q_\ell (T_\ell Q_\ell)^{k-\ell} \tilde{v}^\ell_{0} \|_2 \\
    &= \|Q_\ell^\perp T_\ell Q_\ell ((T_\ell Q_\ell)^{k-\ell} \tilde{v}^\ell_{0} - \mathds{1}) \|_2 \\
    &\leq \|Q_\ell^\perp T_\ell Q_\ell\|_2 \| (T_\ell Q_\ell)^{k-\ell} \tilde{v}^\ell_{0} - \mathds{1} \|_2.
\end{aligned}
\end{equation*}
Following \cite[Eq. (63)]{Sarich2010}, the first factor is such that 
\begin{equation*}
    \|Q_\ell^\perp T_\ell Q_\ell\|_2 \leq me_{\ell, 1} \delta_\ell, 
\end{equation*}
and, by Lemma~\ref{lemma:sarich}, the second is such that 
\begin{equation*}
    \| (T_\ell Q_\ell)^{k-\ell} \tilde{v}^\ell_{0} - \mathds{1} \|_2 \leq e_{\ell, 1}^{k-\ell}\|v_0^\ell\|_2, 
\end{equation*}
which concludes the proof. $\hfill \square$

\subsection{Proof of Theorem~\ref{thm:decreasing}}

By the triangular inequality, it holds that
\begin{equation*}
    \mathsf{TV}(\lambda^\ell_{k - \ell + 1}, \tilde{\lambda}^\ell_{k - \ell + 1}) \leq \mathsf{TV}(\lambda^\ell_{k - \ell + 1}, \mu^\ell) + \mathsf{TV}(\mu^\ell, \tilde{\lambda}^\ell_{k - \ell + 1}), 
\end{equation*}
where $\mu^\ell$ is the lifted invariant measure as defined in \eqref{eq:invariant_lift}. By Hölder's inequality and by Assumption~\ref{ass:spectral}, it holds that 
\begin{equation*}
    \mathsf{TV}(\lambda^\ell_{k - \ell + 1}, \mu^\ell) \leq \frac{1}{2}\|v^\ell_{k - \ell + 1} - \mathds{1}\|_2
    \leq \frac{1}{2}e_{\ell, 1}^{k-\ell+1} \|v_0^\ell\|_2.
\end{equation*}
By Lemma~\ref{lemma:sarich}, we can follow the exact same reasoning as above to get
\begin{equation*}
    \mathsf{TV}(\mu^\ell, \tilde{\lambda}^\ell_{k - \ell + 1})
    \leq \frac{1}{2}e_{\ell, 1}^{k-\ell+1} \|v_0^\ell\|_2, 
\end{equation*}
which concludes the proof. $\hfill \square$

\subsection{Proof of Corollary~\ref{cor:final_bound}}

First, by Theorem~\ref{thm:increasing}, it holds that 
\begin{equation*}
\begin{aligned}
    \mathsf{TV}(\lambda^\ell_{k-\ell+1}, \tilde{\lambda}^\ell_{k-\ell+1})
    &\leq \mathsf{TV}(\lambda^\ell_0, \tilde{\lambda}^\ell_0) + \sum_{i = 0}^{k-\ell} \frac{1}{2}\left(m e_{\ell, 1} \delta_\ell + r_\ell\right) e_{\ell, 1}^{i} \|v_0^\ell\|_2 \\
    &= \mathsf{TV}(\lambda^\ell_0, \tilde{\lambda}^\ell_0) + \frac{1}{2}\left(m e_{\ell, 1} \delta_\ell + r_\ell\right) \left(\sum_{i = 0}^{k-\ell} e_{\ell, 1}^{i}\right) \|v_0^\ell\|_2 \\
    &= \mathsf{TV}(\lambda^\ell_0, \tilde{\lambda}^\ell_0) + \frac{1}{2}\left(m e_{\ell, 1} \delta_\ell + r_\ell\right) \frac{1-e_{\ell, 1}^{k-\ell+1}}{1- e_{\ell, 1}} \|v_0^\ell\|_2.
\end{aligned}
\end{equation*}
Now it remains to show that 
\begin{equation*}
    \mathsf{TV}(\lambda_{k}, \tilde{\lambda}_{\ell, k}) \leq \mathsf{TV}(\lambda^\ell_{k-\ell+1}, \tilde{\lambda}^\ell_{k-\ell+1}).
\end{equation*}
By definition of the total variation distance, 
\begin{equation*}
\begin{aligned}
    \mathsf{TV}(\lambda^\ell_{k-\ell+1}, \tilde{\lambda}^\ell_{k-\ell+1}) &= \sup_{A_{k-\ell+1}, \dots, A_{k} \in \mathcal{B}(E)} \left|
    \begin{aligned}
        &\lambda^\ell_{k-\ell+1}(A_{k-\ell+1} \times \dots \times A_k) \\
        &\quad - \lambda^\ell_{k-\ell+1}(A_{k-\ell+1} \times \dots \times A_k)
    \end{aligned} 
    \right| \\
    &\geq \sup_{A_k \in \mathcal{B}(E)} \left|
    \begin{aligned}
        &\lambda^\ell_{k-\ell+1}(E \times \dots \times E \times A_k) \\
        &\quad - \lambda^\ell_{k-\ell+1}(E \times \dots \times E \times A_k)
    \end{aligned} 
    \right| \\
    &= \sup_{A_k \in \mathcal{B}(E)} |\lambda_k(A_k) - \lambda_{\ell, k}(A_k)| \\
    &= \mathsf{TV}(\lambda_k,  \lambda_{\ell, k}),
\end{aligned}
\end{equation*}
and the proof is completed. $\hfill \square$

\section{Details about the numerical experiments} \label{app:more_details}

In order to compute $\mathsf{TV}(\lambda_k, \tilde{\lambda}_{\ell, k})$ from Example~\ref{ex:2d} and $\tilde{v}_{\ell, k}$, we proceed as follows. First, we know that 
\begin{equation*}
    \lambda_{k} = \mathcal{N}(m_k, \Sigma_k), 
\end{equation*}
where $m_k$ and $\Sigma_k$ satisfy the recurrence
\begin{equation*}
    m_{k+1} = A m_k, \quad \Sigma_{k+1} = A \Sigma_k A^\top + \Sigma_w.
\end{equation*}
The $\mu$-weighted probability density function $v_k \in L^2(\mu)$ is given by 
\begin{equation*}
    v_k = \frac{\text d\lambda_k}{\text d\mu} = \frac{\text d\lambda_k}{\text d \lambda^*} \frac{\text d\lambda^*}{\text d \mu} = \frac{\text d\lambda_k}{\text d \lambda^*} \left(\frac{\text d \mu}{\text d\lambda^*} \right)^{-1}, 
\end{equation*}
where $\lambda^*$ is the Lebesgue measure. Therefore, $v_k$ is defined as 
\begin{equation*}
    v_k(x) = \frac{f_{\mathcal{N}(m_k, \Sigma_k)}(x)}{f_{\mathcal{N}(m_\mu, \Sigma_\mu)}(x)}, 
\end{equation*}
where $f_{\mathcal{N}(m, \Sigma)}$ is the usual probability density function of Gaussian distributions with respect to the Lebesgue measure. Finally, we compute the total variation with a Monte-Carlo approximation, that is 
\begin{equation*}
\begin{aligned}
    \mathsf{TV}(\lambda_k, \tilde{\lambda}_{\ell, k}) &= \frac{1}{2} \int_{x \in E} |v_k(x) - \tilde{v}_{\ell, k}(x)| \mu(\text dx) \\
    &\approx \frac{1}{2} \sum_{i = 1}^{10^4} |v_k(x_i) - \tilde{v}_{\ell, k}(x_i)|, 
\end{aligned}
\end{equation*}
where $x_i$ are i.i.d. samples from the invariant measure. 
\end{document}